\documentclass[preprintnumbers,amsmath,amssymb]{revtex4}

\usepackage{mathrsfs}
\usepackage{amssymb}
\usepackage{graphicx}
\usepackage{dcolumn}
\usepackage{bm}
\usepackage{textcomp}
\usepackage{amsmath}
\usepackage[titletoc]{appendix}
\usepackage{accents}
\usepackage{ntheorem}

\newtheorem{lemma}{Lemma}

\newtheorem{theorem}{Theorem}

\newtheorem*{proof}{Proof}

\newcommand\ket[1]{\ensuremath{|#1\rangle}}
\newcommand\bra[1]{\ensuremath{\langle#1|}}

\newcommand\oprod[2]{\ensuremath{|#1\rangle\langle#2|}}
\newcommand\mean[1]{\ensuremath{\langle #1\rangle}}
\newcommand\tr{\mathop{\rm tr}\nolimits}
\newcounter{RomanNumber}

\makeatletter
\def\widebar{\accentset{{\cc@style\underline{\mskip10mu}}}}
\def\Widebar{\accentset{{\cc@style\underline{\mskip8mu}}}}
\makeatother

\begin{document}
\title{Zigzag approach to higher key rate of sending-or-not-sending twin field quantum key distribution with finite key effects}
\author{Cong Jiang$ ^{1}$, Xiao-Long Hu$ ^{1}$, Hai Xu$^1$, Zong-Wen Yu$ ^{1,3}$
and Xiang-Bin Wang$ ^{1,2,4,5\footnote{Email Address: xbwang@mail.tsinghua.edu.cn}\footnote{Also at Center for Atomic and Molecular Nanosciences, Tsinghua University, Beijing 100084, China}}$}

\affiliation{ \centerline{$^{1}$State Key Laboratory of Low
Dimensional Quantum Physics, Department of Physics,} \centerline{Tsinghua University, Beijing 100084, China}
\centerline{$^{2}$ Synergetic Innovation Center of Quantum Information and Quantum Physics,}\centerline{University of Science and Technology of China, Hefei, Anhui 230026, China
 }
\centerline{$^{3}$Data Communication Science and Technology Research Institute, Beijing 100191, China}
\centerline{$^{4}$ Jinan Institute of Quantum technology, SAICT, Jinan 250101, China}
\centerline{$^{5}$ Shenzhen Institute for Quantum Science and Engineering, and Physics Department,}
\centerline{Southern University of Science and Technology, Shenzhen 518055, China}}
\begin{abstract}
Odd-parity error rejection (OPER) can drastically improve the asymptotic key rate of sending-or-not-sending twin-field (SNS-TF) quantum key distribution (QKD). However, in practice, the finite key effects have to be considered for security. Here, we propose a zigzag approach to verify the phase-flip error of the survived bits after OPER. Based on this, we can take all the finite key effect efficiently in calculating the non-asymptotic key rate. Numerical simulation shows that our method here produces the  highest key rate over all distances among all existing methods, improving the key rate by more than $100\%$ to $3000\%$  in comparison with different prior art methods with typical experimental setting. Also, we show that with the method here, the SNS-TF QKD can by far break the the absolute bound of repeater-less key rate with whatever detection efficiency. We can even reach a non-asymptotic key rate more than $40$ times of the practical bound and $13$ times of the absolute bound with $10^{12}$ pulses. Besides, we apply the McDiarmid inequality to estimate the phase flip error rate, further improving the key rate by more than $20\%$.
\end{abstract}


\maketitle
\section{Introduction}
Quantum key distribution (QKD)\cite{bennett1984quantum,gisin2002quantum,gisin2007quantum,
scarani2009security,shor2000simple,koashi2009simple,tamaki2003unconditionally,kraus2005lower}  can provide secure private communication between two remote parties Alice and Bob~. In the recent years, the efficiency and security of QKD in practice have been extensively studied, e.g., very recently, the important idea  named in Twin-Field (TF) QKD~\cite{lu2018overcoming}, and its variants~\cite{wang2018twin,tamaki2018information,ma2018phase,lin2018simple,cui2019twin,curty2018simple,yu2019sending,maeda2019repeaterless,
lu2019twin,jiang2019unconditional,xu2019general,hu2019general,zhang2019twin,zhou2019asymmetric}. 

The security proof under ideal conditions does not directly apply to the practical QKD where imperfect devices such as the weak coherent state (WCS) sources or the imperfect detectors such as avalanche photodiode detectors (APDs) are used~\cite{huttner1995quantum,yuen1996quantum,brassard2000limitations,lu2000security,lu2002quantum,lydersen2010hacking,
gerhardt2011full,scarani2008quantum}. The decoy-state method~\cite{hwang2003quantum,wang2005beating,lo2005decoy} can assure the security of the QKD protocol with imperfect sources and maintain the high key rate, and thus attracts many studies on both  theories~\cite{wang2007quantum,ad2007simple,wang2007simple,wang2008general,wang2009decoy,
yu2016reexamination,chau2018decoy} and experiments~\cite{rosenberg2007long,schmitt2007experimental,peng2007experimental,liao2017satellite,peev2009secoqc,chen2010metropolitan,
sasaki2011field,frohlich2013quantum,boaron2018secure,wang2008experimental,xu2009experimental}. Besides decoy-state method, there are other protocols such as RRDPS protocol~\cite{sasaki2014practical,takesue2015experimental} proposed to beat photon-number-splitting (PNS) attack. Measurement-Device-Independent (MDI)-QKD~\cite{braunstein2012side,lo2012measurement} was proposed to solve all possible detection loopholes. The decoy-state MDI-QKD can assure the security with imperfect sources and detectors, and thus has been widely studied~\cite{
wang2013three,rubenok2013real,liu2013experimental,tang2014experimental,tang2014measurement,wang2015phase,comandar2016quantum,
yin2016measurement,wang2017measurement,xu2013practical,curty2014finite,xu2014protocol,
yu2015statistical,zhou2016making,jiang2017measurement}.

The very recently proposed TF QKD~\cite{lu2018overcoming}, together with its variants~\cite{wang2018twin,tamaki2018information,ma2018phase,lin2018simple,cui2019twin,curty2018simple,yu2019sending,maeda2019repeaterless,
lu2019twin,jiang2019unconditional,xu2019general,hu2019general,zhang2019twin,zhou2019asymmetric}, changes the key rate into square root scale of channel transmittance. The protocol can break the repeater-less key rate limit such as the TGW bound~\cite{takeoka2014fundamental} presented by Takeoka, Guha, and Wilde, and the PLOB bound~\cite{pirandola2009direct} established by Pirandola, Laurenza, Ottaviani, and Banchi, which will be used in this work. The well-known PLOB bound also corresponds to the secret key capacity of the lossy communication channel. Experiments~\cite{minder2019experimental,liu2019experimental,wang2019beating,zhong2019proof,fang2019surpassing} have been done to demonstrate those protocols. In particular, the efficient variant of TF QKD, named in sending-or-not-sending (SNS) protocol has been proposed in Ref.~\cite{wang2018twin}. The SNS protocol has its advantage of tolerating large misalignment errors~\cite{wang2018twin,yu2019sending} and unconditional security with finite pulses~\cite{jiang2019unconditional}. The numerical results show that the secure distance can exceed $500$ km even when the misalignment error is as large as $20\%$~\cite{jiang2019unconditional}. The SNS protocol has been experimentally demonstrated in proof-of-principle in Ref.~\cite{minder2019experimental}, and realized in real optical fiber with the finite key effects being taken into consideration~\cite{liu2019experimental}.

However, there are still considerable spaces to further improve the performance of the SNS protocol. For example, the original SNS protocol~\cite{wang2018twin,yu2019sending,jiang2019unconditional} is limited to small probability of sending a signal coherent state and this limits its key rate.

The two-way communication method, with odd-parity error rejection (OPER) can be applied to SNS protocol and thus increases the probability of sending the signal coherent state~\cite{xu2019general}. This method can improve the asymptotic key rate and secure distance of SNS protocol drastically.  However, there are only finite pulses and finite intensities in practice and we need to consider the finite key effects. In this article, we study the the finite key effects for the SNS protocol with OPER. Extensive comparisons among different protocols show that the key rate simulated by the method of this work presents the highest key rate known so far if we assume a reasonable finite size of the key. It presents advantageous results by more than $2$ to $30$ times of the prior art results.  Besides, we apply the improved version of McDiarmid inequality~\cite{chau2019application} to the estimating of phase-flip error rate, which can reduce the effect of statistical fluctuation and further improve the key rate especially when the key size is small. Also, we show that with the method here, the SNS-TF QKD can by far break the absolute key rate limit of the repeater-less QKD. We can even reach a non-asymptotic key rate more than $13$ times of this absolute limit.

Breaking the repeater-less key-rate limit is the most fascinating property of the TF-QKD. In practice, there are lot of constraints. We set the following conditions for practical QKD results that unconditionally break the repeater-less key rate limit:

\noindent \textbf{1}. The protocol itself must be secure.

\noindent \textbf{2}. A reasonable finite size of the key should be assumed and the finite key effects should be considered in calculation.

\noindent \textbf{3}. The actual key rate should break the absolute PLOB bound~\cite{pirandola2017fundamental}, which upper bounds the repeater-less key rate given whatever local devices, including the perfect detection devices.

For this goal, we need to consider the finite key effects and we need to show a high key rate exceeding the absolute PLOB bound.

Before going into further details, let us first briefly review the OPER and see why the problem of finite key effects is not straightforward.
To improve the performance, we can apply the bit-flip error rejection in the SNS protocol. Say, Alice and Bob group their raw bits two by two. To each group of two-bits, they compare the parity values. They discard both bits of the group when they have different parity values for the group and they discard one bit and keep another bit in the group if they have the same parity values for the group. After error rejection, the bit-flip rate is expected to be reduced but the phase-flip error rate is normally increased. Moreover, due to the structure of SNS protocol, the bit-flip errors rise if we raise the sending probability. As was shown in \cite{xu2019general}, it will be more efficient in reducing bit-flip errors if we only use odd parity events in the error rejection, and we name this as odd-parity error rejection (OPER).   In the SNS protocol, all bit-flip errors come from tagged bits. There is no bit-flip error for those untagged bits, so the phase-flip error rate is iterated by the following formula~\cite{chau2002practical,gottesman2003proof}:
\begin{equation}
e_1^\prime=2e_1(1-e_1),
\end{equation}
where $e_1$ is the phase-flip error rate of those raw untagged bits before error rejection and  $e_1^\prime$ is the phase-flip error rate for those survived untagged bits after error rejection.

Note that this iteration formula is the averaged value of phase-flip error rate for the survived untagged bits from both odd-parity groups and even parity groups. However, we cannot blindly use this iteration formula for the phase-flip error rate for survived untagged bits from odd-parity groups alone, because the phase-flip error rates of survived untagged bits from odd parity is in general {\em different} from that of even-parity groups. For example, consider the virtual protocol that Alice and Bob share many entangled pair states with each pair in the state $|\Psi\rangle=\frac{1}{\sqrt{2}}(|00\rangle+e^{i\phi}|11\rangle)$, where $0$ and $1$ mean qubit $0$ and $1$, and the first qubit belongs to Alice and the second qubit belongs to Bob. The random grouping of two untested bits in the real protocol is now related to the random grouping of two pair states in the virtual protocol. According to Ref.~\cite{kraus2007security}, the operators of OPER are
\begin{equation}\label{opero}
\begin{split}
\hat{O}_A=&\ket{0}_A\bra{0}_A\bra{1}_A+\ket{1}_A\bra{1}_A\bra{0}_A,\\
\hat{O}_B=&\ket{0}_B\bra{0}_B\bra{1}_B+\ket{1}_B\bra{1}_B\bra{0}_B.
\end{split}
\end{equation} 
The subscript $A$ means it acts on Alice's qubit and subscript $B$ means it acts on Bob's qubit. And the operators of even-parity error rejection are 
\begin{equation}
\begin{split}
\hat{O}_A^\prime=&\ket{0^\prime}_A\bra{0}_A\bra{0}_A+\ket{1^\prime}_A\bra{1}_A\bra{1}_A,\\
\hat{O}_B^\prime=&\ket{0^\prime}_B\bra{0}_B\bra{0}_B+\ket{1^\prime}_B\bra{1}_B\bra{1}_B.
\end{split}
\end{equation} 
After error rejection, the state of odd-parity is
\begin{equation*}
\begin{split}
|\rm{odd}\rangle=\frac{\hat{O}_A\hat{O}_B|\Psi\rangle\otimes |\Psi\rangle}{\sqrt{\bra{\Psi}\otimes \bra{\Psi}\hat{O}_B^\dagger \hat{O}_A^\dagger \hat{O}_A\hat{O}_B|\Psi\rangle\otimes |\Psi\rangle}}=\frac{e^{i\phi}}{\sqrt{2}}(\ket{00}+\ket{11}),
\end{split}
\end{equation*} 
and the state of even-parity is
\begin{equation*}
\begin{split}
|\rm{even}\rangle=\frac{\hat{O}_A^\prime \hat{O}_B^\prime|\Psi\rangle\otimes |\Psi\rangle}{\sqrt{\bra{\Psi}\otimes \bra{\Psi}\hat{O}_B^{\prime\dagger} \hat{O}_A^{\prime\dagger} \hat{O}_A^\prime \hat{O}_B^\prime |\Psi\rangle\otimes |\Psi\rangle}}=\frac{1}{\sqrt{2}}(\ket{0^\prime 0^\prime}+e^{2i\phi}\ket{1^\prime 1^\prime}).
\end{split}
\end{equation*} 
This shows that the phase-flip error rate after parity check is dependent on the parity value of the group: $0$ for odd parity and $2\sin^2\phi$ for even parity~\cite{xu2019general}. In general, the iteration formula does not have to hold for our OPER method, which shall only use survived bits from odd-parity group. Luckily, we have shown by de Finetti theorem that after error rejection, the phase-flip error rate of survived untagged bits from odd-parity groups can never be larger than those from even-parity groups, in the limit of infinite number of raw pairs initially~\cite{xu2019general}. However, in a real protocol, we never have infinite number of raw pairs. Since we have used de Finetti theorem, there is no straightforward calculation to take the finite key effects efficiently. If we choose to directly apply the collective-attack model, it will cost a lot of bits in lifting the collective-attack security to coherent-attack security~\cite{christandl2009postselection}. For example, we need to use a failure probability value of the statistical fluctuation as small as $10^{-100}$. Given this request, the final key rate is actually smaller than that of the existing SNS protocol, and, the final key rate can even be 0 if the data size is not unreasonably small. Here we shall give a more efficient way for the issue and present advantageous results with the method with normal, reasonable data size. To do so, we shall first use the Zigzag approach to faithfully bound the phase-flip error rate for those survived untagged bits after OPER. Then we can efficiently take all finite key effects in calculating the finial key.

This paper is arranged as follows. In Sec.~\ref{phaseflip} we present our Zigzag approach to estimate the phase-flip iterative with explicit formulas. In Sec.~\ref{protocol}, we review the SNS protocol and show how to apply the formulas got in Sec.~\ref{phaseflip} to the SNS protocol. In Sec. \ref{asyy}, we show how to apply the OPER method to the asymmetric SNS protocol. In Sec.~\ref{simulation}, we present our numerical simulation results and compare key rates of different methods. The article ends with some concluding remarks. The details of some calculations are shown in the appendix.

\section{Zigzag approach to phase-flip error rate after OPER}\label{phaseflip}
\subsection{Mathematical toolbox and main idea.}\label{mainidea}
For clarity, we first consider the virtual protocol where Alice and Bob share raw entanglement pairs. There is no bit-flip error of all these pairs. After phase-flip error test, they share $2n+k$ of raw entanglement pairs. They randomly choose $2n$ pairs for final key distillation and the left $k$ pairs would be neglected. Denote the state of these $2n$ pairs by $\rho^{2n}$. According to the exponential de Finetti’s representation theorem~\cite{renner2005security,renner2007symmetry}, mathematically, there exists an associate state $\tilde \rho^{2n}$ of state $\rho^{2n}$ which satisfies the following conditions:

\noindent i) The trace distance between $\rho^{2n}$ and $\tilde \rho^{2n}$ is bounded by
\begin{equation}\label{definit}
\Arrowvert \rho^{2n}-\tilde{\rho}^{2n}\Arrowvert\le \varepsilon(r,k)=3k^d e^{-rk/(2n+k)},
\end{equation}

\noindent and ii), state $\tilde \rho^{2n}$ is in the following form:
\begin{equation}\label{definit2}
\tilde{\rho}^{2n}=\int P_{\sigma} \rho_{\sigma}^{2n}d\sigma=\int P_{\sigma} \sigma^{\otimes 2n-r}\otimes \tilde{\rho}_\sigma^r  d\sigma.
\end{equation}
This is to say, it is a state of probabilistic mixture of approximate independent and identically distributed (i.i.d.) states.
The state $\rho_{\sigma}^{2n}$ is called
$ \left( \begin{smallmatrix}
             2n \\ 2n-r
          \end{smallmatrix}  \right) $-i.i.d. state, as only $2n-r$ subsystems of $\rho_{\sigma}^{2n}$ are i.i.d. and the state $\tilde{\rho}_\sigma^r$ is an arbitrary state on $r$ subsystems. For the case we consider here, there is no bit-flip error of raw entanglement pairs, thus the dimension of each subsystem is $d=2$. $\varepsilon(r,k)$ decreases exponentially fast in $rk$. For the practical case, we can set $\varepsilon(r,k)=10^{-13}$ or smaller.

The probability distribution of the number of phase error $m$, of state $\rho^{2n}$, could be observed if a positive operator valued measurement (POVM) $\{\hat{E}_m\}$ was performed~\cite{nielsen2002quantum}. Actually, $p_m$ is just a pure mathematical quantity defined by
\begin{equation}
\{p_m=\tr(\rho^{2n}\hat{E}_m)\}
 \end{equation}
 where $\hat{E}_m$ is the projection operator projecting any state of the $2n$ pairs to the subspace $\mathcal M$ where the number of phase-flip errors for any state takes value $m$ deterministically. 
Mathematically, we also have the probability distribution $q_m$ for the number of phase error $m$, of state $\tilde{\rho}^{2n}$, which is $\{q_m=\tr(\tilde{\rho}^{2n}\hat{E}_m)\}$. According to the property of trace distance, we have
\begin{equation}\label{prob1}
\sum_{m}|p_m-q_m|\le 2\varepsilon(r,k),
\end{equation}
which means the probability distributions of the number of phase error of those two states are almost the same except with a small probability $2\varepsilon(r,k)$. 

To do OPER, they randomly group those $2n$ pairs two by two. For each group of pairs, Alice (Bob) performs the measurement of $\{\hat{O}_A,\hat{O}_A^\prime\}$ ($\{\hat{O}_B,\hat{O}_B^\prime\}$) to her (his) qubits. Since there is no bit-flip error in the raw pairs, only the operators $\hat{O}_A\hat{O}_B$ and $\hat{O}_A^\prime \hat{O}_B^\prime$ would be succeed. And only the results of the operator $\hat{O}_A\hat{O}_B$ would be kept for further key distillation. This completes the OPER. Those pairs after OPER are named as {\em survived pairs}. One can measure the number of phase-flip errors $m_s$ of those {\em survived pairs}.
Physically, there also exists a POVM $\{\hat{E}_m^\prime\}$ which can be taken directly on the initial $2n$ pairs to present the value $m_s$.  The existence of such a POVM can be easily proven.  Therefore, we can also use the probability distributions of $m_s$ for state $\rho^{2n}$ and $\tilde \rho^{2n}$, respectively. We denote the probability distribution of $m_s$ by  $\{p_{m_s}^\prime=\tr(\rho^{2n}\hat{E}_m^\prime)\}$ and $\{q_{m_s}^\prime=\tr(\tilde{\rho}^{2n}\hat{E}_m^\prime)\}$. Similar to Eq.~\eqref{prob1}, we have
\begin{equation}\label{prob2}
\sum_{m_s}|p_{m_s}^\prime-q_{m_s}^\prime|\le 2\varepsilon(r,k).
\end{equation}
\\{\em Remark:} Values of $\{p_m,q_m,p_{m_s}^\prime,q_{m_s}^\prime \}$ present important mathematical properties of states $\rho^{2n}$ and $\tilde\rho^{2n}$. We shall make use of constraints on these values to finally upper bound the phase-flip error rate $e_1^{\prime ph}$ for survived pairs after OPER.

\noindent \textbf{Definition 1}. We define $\Pr\limits_{\mathcal{D}}(x\ge\bar{x})=\sum_{x\ge\bar{x}}\Pr\limits_{\mathcal{D}}(x)$, where $\{\Pr\limits_{\mathcal{D}}(x)\}$ is a probability distribution labelled by $\mathcal{D}$ \textbf{(}$\{\Pr\limits_{\mathcal{D}}(x)\}$ is actually just a probability distribution of the discrete variable $x$, but too many probability distributions are used in the following proof, thus we define such a symbol to clearly show everything more intuitively and vividly by choosing the labelling symbol $\mathcal{D}$ properly. In choosing labelling symbol $\mathcal{D}$, we shall try to always use a symbol characterize the main properties for the probability distribution it labels. Of course every time when we use a different labelling $\mathcal{D}$, we shall always note it clearly that the explicit probability distribution it labels.\textbf{)}. For example, $\Pr\limits_{\{p_m\}}(m\ge\bar{m})=\sum_{m\ge\bar{m}}p_m$.


Here is our main idea in bounding the value of $e_{1}^{\prime ph}$, the phase-flip error rate from those survived bits after OPER. Our goal can be reached once we find tight bound on values of $\{p_{m_s}^\prime\}$.  

\noindent 1. According to the phase-flip error test in the very beginning~\cite{tomamichel2012tight,curty2014finite,jiang2019unconditional}, we can find that the probability for $m\ge \bar M$ is less than $\varepsilon_e$, where $m$ is the number of phase-flip errors from those $2n$ pairs in state $\rho^{2n}$. On the other hand, this probability is just $\sum_{m\ge \widebar M} p_m$. Therefore we can constrain values of $\{p_m\}$ by
\begin{equation}\label{probb1}
\Pr\limits_{\{p_m\}}(m\ge\widebar{M}) \le \varepsilon_e.
\end{equation}

\noindent 2. Since the trace distance between $\rho^{2n}$ and its associate state $\tilde{\rho}^{2n}$ is small, with Eqs. (\ref{prob1}) and \eqref{probb1}, we can further restrict values of $\{q_m\}$ by
\begin{equation}\label{prob3}
\Pr\limits_{\{q_m\}}(m\ge \widebar{M})\le \varepsilon_e+2\varepsilon(r,k).
\end{equation}
 
\noindent 3. With this, we can constrain values of $\{q_{m_s}^\prime\}$ based on the fact that $\tilde{\rho}^{2n}$ is the probabilistic mixture of approximate i.i.d. state. Suppose we have
\begin{equation}\label{prob4}
\Pr\limits_{\{q_{m_s}^\prime\}}(m_s\ge \widebar{M}_s)\le \tilde{\varepsilon}_s,
\end{equation}
where $\widebar{M}_s$ is an estimated value and $\tilde{\varepsilon}_s$ is the corresponding failure probability.

\noindent 4. Again, since the trace distance between $\rho^{2n}$ and its associate state $\tilde{\rho}^{2n}$ is small, with the bounded result in step 3, we can now restrict values of $\{p_{m_s}^\prime\}$ by Eqs.~\eqref{prob2} and ~\eqref{prob4}, which is
\begin{equation}\label{keyeq1}
\Pr\limits_{\{p_{m_s}^\prime\}}(m_s\ge \widebar{M}_s)\le \tilde{\varepsilon}_s+2\varepsilon(r,k)= \varepsilon_s.
\end{equation}

\noindent 5. Finally, with constraint on $\{p_{m_s}^\prime\}$ in step 4, we can get the upper bound of the phase-flip error rate $e_{1}^{\prime ph}$ with a failure probability $\varepsilon_s$, which is
\begin{equation}
e_{1}^{\prime ph}=\frac{\widebar{M}_s}{n_1^\prime},
\end{equation} 
where $n_{1}^\prime$ is the number of survived pairs after OPER.

Among all values and failure probability appeared in Eqs.~(\ref{prob1}-\ref{prob4}), $\widebar{M}$ and $\varepsilon_e$ could be get from the phase-flip error test in the very beginning~\cite{tomamichel2012tight,curty2014finite,jiang2019unconditional}, and $\varepsilon(r,k)$ is set as we wanted. The only unknown values are $\widebar{M}_s$ and $\tilde{\varepsilon}_s$. Our task is now reduced to find the explicit values of $\widebar{M}_s$ and its corresponding failure probability $\tilde{\varepsilon}_s$.

\subsection{The values of $\bar{M}_s$ and its corresponding failure probability $\tilde{\varepsilon}_s$}

\begin{lemma} \label{lemma1}
Let $\{\Pr\limits_{c_1}(x_1)\}$ be an $a_1$-fold Bernoulli distribution where the success probability of obtaining outcome $'1'$ of every Bernoulli trial is $c_1$. Let $\{\Pr\limits_{\mathcal{D}_2}(x_2)\}$ be an arbitrary probability distribution on $a_2$ random variables with outcomes $'0'$ or $'1'$ and $a_2\le a_1$. Let $x$ be a new discrete variable which satisfies $x=x_1+x_2$, and we denote the probability distribution of $x$ as $\{\Pr\limits_{\mathcal{D}}(x)\}$. Then we have
\begin{equation}\label{keyle1}
\begin{split}
&\Pr\limits_{\mathcal{D}}(x\ge\bar{x})\ge \Pr\limits_{c_1}(x_1\ge\bar{x}),\\
&\Pr\limits_{\mathcal{D}}(x\ge\bar{x})\le \Pr\limits_{c_1}(x_1\ge\bar{x}-a_2),
\end{split}
\end{equation}
where $\bar{x}$ is a specific value in $[a_2,a_1]$.
\end{lemma}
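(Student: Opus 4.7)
The plan is to prove both inequalities by a simple event-inclusion argument in the joint sample space of $(x_1,x_2)$, using only the fact that $x_2$ is supported on $\{0,1,\ldots,a_2\}$. Under the natural reading of the lemma, $x_1$ has marginal $\Pr_{c_1}$ and $x_2$ has marginal $\Pr_{\mathcal{D}_2}$, so the marginal tail probabilities $\Pr_{c_1}(x_1\ge \cdot)$ on the right-hand sides coincide with the joint probabilities of the corresponding events $\{x_1\ge\cdot\}$ inside $\Pr_{\mathcal D}$.

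For the lower bound, I would observe that $x_2\ge 0$ holds with probability one, so whenever $x_1\ge\bar{x}$ we automatically have $x=x_1+x_2\ge\bar{x}$. This gives the set inclusion $\{x_1\ge\bar{x}\}\subseteq\{x\ge\bar{x}\}$, and monotonicity of probability yields
\[
\Pr\limits_{\mathcal D}(x\ge\bar{x})\ \ge\ \Pr\limits_{\mathcal D}(x_1\ge\bar{x})\ =\ \Pr\limits_{c_1}(x_1\ge\bar{x}).
\]
For the upper bound I would use the symmetric fact that $x_2\le a_2$ with probability one, so $x_1+x_2\ge\bar{x}$ forces $x_1\ge \bar{x}-a_2$. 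Thus $\{x\ge\bar{x}\}\subseteq\{x_1\ge\bar{x}-a_2\}$, and the same monotonicity step delivers $\Pr_{\mathcal D}(x\ge\bar{x})\le \Pr_{c_1}(x_1\ge\bar{x}-a_2)$. The hypothesis $\bar{x}\in[a_2,a_1]$ is used exactly to ensure that both $\bar{x}$ and $\bar{x}-a_2$ lie in $[0,a_1]$, so the Bernoulli tails on the right are well-defined (and non-trivial).

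I do not expect any genuine obstacle here; the argument needs neither concentration inequalities nor the specific Bernoulli structure beyond the fact that $x_1$ has a fixed marginal. The only point that deserves a sentence of care is that the inequalities hold without any independence assumption between $x_1$ and $x_2$: since the bounds are proved by comparing events within the joint distribution and then invoking the marginals, any coupling compatible with the stated marginals yields the same conclusion. This is what makes the lemma useful in the subsequent steps, where $x_1$ will play the role of the phase-error count coming from an i.i.d. piece of the de Finetti decomposition and $x_2$ will absorb the adversarial contribution from the remaining $r$ subsystems of $\tilde\rho_\sigma^r$ in Eq.~\eqref{definit2}.
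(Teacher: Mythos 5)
Your proof is correct and rests on the same essential fact as the paper's: since $0\le x_2\le a_2$ almost surely, the event $\{x_1\ge\bar{x}\}$ is contained in $\{x\ge\bar{x}\}$, which in turn is contained in $\{x_1\ge\bar{x}-a_2\}$, exactly the threshold-shifting step the paper performs by writing $\Pr\limits_{\mathcal{D}}(x\ge\bar{x})=\sum_{x_2=0}^{a_2}\Pr\limits_{\mathcal{D}_2}(x_2)\sum_{x_1=\bar{x}-x_2}^{a_1}\Pr\limits_{c_1}(x_1)$ and bounding the inner sum. The only notable difference is that the paper's factorized expression implicitly uses independence of $x_1$ and $x_2$ (which holds in the application, since they arise from the tensor-product parts $\sigma^{\otimes 2n-r}$ and $\tilde{\rho}_\sigma^{r}$), whereas your event-inclusion argument needs only the marginals and the boundedness of $x_2$, a harmless and slightly more general phrasing.
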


\begin{proof}
According to the definition of $\Pr\limits_{\mathcal{D}}(x\ge\bar{x})$, we have
\begin{equation}\label{eqpp1}
\Pr\limits_{\mathcal{D}}(x\ge\bar{x})=\sum_{x_2=0}^{a_2}\Pr\limits_{\mathcal{D}_2}(x_2)\sum_{x_1=\bar{x}-x_2}^{a_1}\Pr\limits_{c_1}(x_1).
\end{equation}
And obviously, we have
\begin{equation}\label{eqpp2}
\begin{split}
&\sum_{x_1=\bar{x}-x_2}^{a_1}\Pr\limits_{c_1}(x_1)\ge\sum_{x_1=\bar{x}}^{a_1}\Pr\limits_{c_1}(x_1),\\
&\sum_{x_1=\bar{x}-x_2}^{a_1}\Pr\limits_{c_1}(x_1)\le\sum_{x_1=\bar{x}-a_2}^{a_1}\Pr\limits_{c_1}(x_1).
\end{split}
\end{equation}
Combine Eqs.~(\ref{eqpp1},\ref{eqpp2}) and the fact that $\sum_{x_2=0}^{a_2}\Pr\limits_{\mathcal{D}_2}(x_2)=1$, we can get Eq.~\eqref{keyle1}. This ends the proof of Lemma \ref{lemma1}.
\end{proof}

Lemma \ref{lemma1} shows that if $a_1\gg a_2$, the statistical property of $x$ is almost determined by its i.i.d. part $x_1$. This also is the original intention of exponential de Finetti’s representation theorem~\cite{renner2005security,renner2007symmetry}.

\begin{lemma} \label{lemma2}
Let $\{\Pr\limits_{c_1}(x_1)\}$ be an $a_1$-fold Bernoulli distribution where the success probability of obtaining outcome $'1'$ of every Bernoulli trial is $c_1$. Let $\{\Pr\limits_{c_2}(x_1)\}$ be an $a_1$-fold Bernoulli distribution where the success probability of obtaining outcome $'1'$ of every Bernoulli trial is $c_2$. If $c_1\ge c_2$, we have
\begin{equation}\label{keyle2}
\Pr\limits_{c_1}(x_1\ge\bar{x})\ge \Pr\limits_{c_2}(x_1\ge\bar{x})
\end{equation}
where $\bar{x}$ is a specific value in $[0,a_1]$.
\end{lemma}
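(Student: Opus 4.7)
The plan is to prove the claim as a standard stochastic-dominance statement for binomial distributions, using a coupling argument. I would introduce $a_1$ independent uniform random variables $U_1,\ldots,U_{a_1}$ on $[0,1]$ on a common probability space and define, for each $i$, the indicator variables $Y_i = \mathbf{1}[U_i \le c_1]$ and $Z_i = \mathbf{1}[U_i \le c_2]$. Then each $Y_i$ is Bernoulli with parameter $c_1$ and each $Z_i$ is Bernoulli with parameter $c_2$, so the sums $Y = \sum_{i=1}^{a_1} Y_i$ and $Z = \sum_{i=1}^{a_1} Z_i$ realize the distributions $\{\Pr_{c_1}(x_1)\}$ and $\{\Pr_{c_2}(x_1)\}$ on the same space.

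The key observation is that, because $c_1 \ge c_2$, the event $\{U_i \le c_2\}$ is contained in $\{U_i \le c_1\}$, so $Z_i \le Y_i$ deterministically for every $i$, and hence $Z \le Y$ almost surely. In particular, $\{Z \ge \bar x\} \subseteq \{Y \ge \bar x\}$, which gives $\Pr(Y \ge \bar x) \ge \Pr(Z \ge \bar x)$; translating back into the notation of the lemma yields Eq.~\eqref{keyle2} directly.

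As an alternative (more analytical) route, one can differentiate $F(c) := \sum_{k=\bar x}^{a_1}\binom{a_1}{k} c^k (1-c)^{a_1-k}$ in $c$; after standard telescoping of the binomial terms, the derivative collapses to $a_1 \binom{a_1-1}{\bar x-1} c^{\bar x-1}(1-c)^{a_1-\bar x} \ge 0$, so $F$ is nondecreasing on $[0,1]$ and the inequality follows by integrating from $c_2$ to $c_1$. This avoids introducing an auxiliary probability space, at the cost of a short identity manipulation.

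I do not expect any real obstacle: the statement is a textbook stochastic-dominance fact for Binomials, and the coupling argument makes it transparent. The only minor care needed is for boundary values of $\bar x$ (e.g.\ $\bar x \le 0$, where both tails equal $1$, or $\bar x > a_1$, where both tails vanish), which are trivially handled by the same inclusion $\{Z \ge \bar x\} \subseteq \{Y \ge \bar x\}$.
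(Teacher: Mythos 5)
Your proof is correct. Note that the paper does not actually supply an argument here: its ``proof'' of Lemma~2 is the single sentence that the claim is a direct consequence of the binomial distribution, i.e.\ it treats the monotonicity of the binomial upper tail in the success probability as a known fact. Your coupling construction ($Y_i=\mathbf{1}[U_i\le c_1]$, $Z_i=\mathbf{1}[U_i\le c_2]$, so $Z\le Y$ almost surely and the tail events are nested) is the standard stochastic-dominance argument that fills in exactly this omitted detail, and your alternative route via differentiating $F(c)=\sum_{k=\bar x}^{a_1}\binom{a_1}{k}c^k(1-c)^{a_1-k}$, whose derivative telescopes to $a_1\binom{a_1-1}{\bar x-1}c^{\bar x-1}(1-c)^{a_1-\bar x}\ge 0$, is an equally valid analytic version of the same fact. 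Either write-up is more complete than what the paper provides; there is no gap in your reasoning, and the boundary cases you mention are handled correctly.
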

\begin{proof}
Lemma \ref{lemma2} is a directed conclusion of binomial distribution.
\end{proof}

The state $\sigma$ of a subsystem in $\rho_\sigma^{2n}$ is a $2$-dimensional state, which can be express as
\begin{equation}
\begin{split}
\sigma=&\cos^2\theta\oprod{00}{00}+\sin^2\theta\oprod{11}{11}+\alpha e^{-i\beta}\oprod{11}{00}+\alpha e^{i\beta}\oprod{00}{11},
\end{split}
\end{equation}
where $0$ and $1$ mean qubit $0$ and $1$, and the first qubit belongs to Alice and the second qubit belongs to Bob. $\theta,\alpha,\beta$ are three arbitrary real numbers and satisfy
\begin{equation}
\alpha^2\le \sin^2\theta\cos^2\theta.
\end{equation}

\begin{lemma} \label{lemma3}
Let $\mean{e_\sigma}$ be the probability that an error occurs if Alice and Bob measures their qubit of state $\sigma$ in $X$ basis. Let $\mean{E_\sigma}$ be the probability that an error occurs if Alice and Bob measures their qubit in $X$ basis after OPER performed on $\sigma^{\otimes 2}$ (the detail of the definition of $\mean{E_\sigma}$ is shown in Eq.~\eqref{Esigma}). Let $\bar{e}$ be a specific value in $[0,0.5]$. If $\mean{e_\sigma}\le \bar{e}$, we have
\begin{equation}
\mean{E_\sigma}\le \bar{e}(1-\bar{e}).
\end{equation}
\end{lemma}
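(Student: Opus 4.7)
The plan is to reduce the lemma to two explicit trace calculations followed by an elementary algebraic estimate. First I would compute $\mean{e_\sigma}$ and $\mean{E_\sigma}$ in closed form as functions of the three real parameters $\theta,\alpha,\beta$ of $\sigma$; then the hypothesis $\mean{e_\sigma}\le\bar{e}$ together with $\bar{e}\in[0,1/2]$ will translate into a one-line constraint on these parameters, which combined with $\sin^2\theta\cos^2\theta\le 1/4$ yields the claim.

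For $\mean{e_\sigma}=\langle+-|\sigma|+-\rangle+\langle-+|\sigma|-+\rangle$, I would use that $\sigma$ is supported on $\mathrm{span}\{|00\rangle,|11\rangle\}$ together with the elementary overlaps $\langle\pm\mp|00\rangle=1/2$ and $\langle\pm\mp|11\rangle=-1/2$. The two diagonal terms of $\sigma$ contribute $(\cos^2\theta+\sin^2\theta)/4=1/4$ to each outcome and the two off-diagonal ones contribute $-\alpha e^{\pm i\beta}/4$, so summing gives $\mean{e_\sigma}=\tfrac{1}{2}-\alpha\cos\beta$. For $\mean{E_\sigma}$, I would expand $\sigma\otimes\sigma$ in the ordered basis $A_1B_1A_2B_2$: since $\sigma$ has no $|01\rangle,|10\rangle$ components, only kets with $A_i=B_i$ appear, and together with the odd-parity projections $A_1\ne A_2$, $B_1\ne B_2$ imposed by $\hat O_A\otimes\hat O_B$ this leaves exactly the four index configurations with $A_1B_1A_2B_2\in\{0011,1100\}$ in both the ket and the bra. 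Since $\hat O_A\hat O_B$ sends $|0011\rangle\mapsto|00\rangle_{AB}$ and $|1100\rangle\mapsto|11\rangle_{AB}$, collecting the four surviving contributions produces
\[ \hat O_A\hat O_B(\sigma\otimes\sigma)\hat O_B^\dagger\hat O_A^\dagger=\cos^2\theta\sin^2\theta\bigl(|00\rangle\langle 00|+|11\rangle\langle 11|\bigr)+\alpha^2\bigl(|00\rangle\langle 11|+|11\rangle\langle 00|\bigr). \]
Taking the trace against $|+-\rangle\langle+-|+|-+\rangle\langle-+|$ then gives $\mean{E_\sigma}=\sin^2\theta\cos^2\theta-\alpha^2$; this matches the definition at Eq.~\eqref{Esigma}, under which $\mean{E_\sigma}$ is the joint probability of OPER succeeding together with an $X$-basis error on the surviving pair.

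The rest is a short algebraic manipulation. The hypothesis $\mean{e_\sigma}\le\bar{e}$ with $\bar{e}\in[0,1/2]$ forces $\alpha\cos\beta\ge\tfrac{1}{2}-\bar{e}\ge 0$, so $\alpha^2\ge\alpha^2\cos^2\beta\ge(\tfrac{1}{2}-\bar{e})^2$. Combining with $\sin^2\theta\cos^2\theta\le 1/4$ (immediate from $\sin^2(2\theta)\le 1$) yields
\[ \mean{E_\sigma}=\sin^2\theta\cos^2\theta-\alpha^2\le\tfrac{1}{4}-\bigl(\tfrac{1}{2}-\bar{e}\bigr)^2=\bar{e}(1-\bar{e}), \]
as required. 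The main obstacle is the book-keeping in the $\mean{E_\sigma}$ computation: tracking how the four-qubit indices collapse under $\hat O_A\hat O_B$ and checking that the two off-diagonal contributions pick up $|\alpha e^{i\beta}|^2=\alpha^2$ (real and nonnegative) rather than an oscillating $\alpha^2 e^{\pm 2i\beta}$. Once the closed forms for $\mean{e_\sigma}$ and $\mean{E_\sigma}$ are in hand, the bound follows from one line of elementary arithmetic.
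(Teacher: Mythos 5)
Your proposal is correct and follows essentially the same route as the paper: it derives the same closed forms $\mean{e_\sigma}=\tfrac{1}{2}-\alpha\cos\beta$ and $\mean{E_\sigma}=\sin^2\theta\cos^2\theta-\alpha^2$ from the definitions and then concludes by the same elementary algebra (using $\cos^2\beta\le 1$ and $\sin^2\theta\cos^2\theta\le\tfrac{1}{4}$). The only cosmetic difference is that the paper routes the final step through the intermediate bound $\mean{E_\sigma}\le\mean{e_\sigma}(1-\mean{e_\sigma})$ and the monotonicity of $x(1-x)$ on $[0,\tfrac12]$, whereas you substitute $\bar{e}$ directly; the content is identical.
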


\begin{proof}
We have proof Lemma \ref{lemma3} in Ref.\cite{xu2019general}. For completeness, we write the proof again.

According to the definition of $\mean{e_\sigma}$, we have
\begin{equation}\label{esigma}
\mean{e_\sigma}=\tr(\hat{M}_1\sigma \hat{M}_1^\dagger+\hat{M}_2\sigma \hat{M}_2^\dagger)=\frac{1}{2}(1-2\alpha\cos \beta),
\end{equation}
where
\begin{equation}
\begin{split}
\hat{M}_1&=\ket{+}_A\bra{+}_A\otimes \ket{-}_B\bra{-}_B,\\
\hat{M}_2&=\ket{-}_A\bra{-}_A\otimes \ket{+}_B\bra{+}_B,
\end{split}
\end{equation}
and
\begin{equation}
\begin{split}
\ket{+}&=\frac{1}{\sqrt{2}}(\ket{0}+\ket{1}),\\
\ket{-}&=\frac{1}{\sqrt{2}}(\ket{0}-\ket{1}).
\end{split}
\end{equation}

With the operator defined in Eq.~\eqref{opero}, we have
\begin{equation}\label{Esigma}
\begin{split}
\mean{E_\sigma}=&\tr(\hat{M}_1\hat{O}_A\hat{O}_B\sigma^{\otimes 2} \hat{O}_B^\dagger \hat{O}_A^\dagger \hat{M}_1^\dagger+\hat{M}_2\hat{O}_A\hat{O}_B\sigma^{\otimes 2} \hat{O}_B^\dagger \hat{O}_A^\dagger \hat{M}_2^\dagger)\\
=&\sin^2\theta\cos^2\theta-\alpha^2.
\end{split}
\end{equation}

Directly, with Eqs.~\eqref{esigma} and \eqref{Esigma}, we have
\begin{equation}
\mean{E_\sigma}\le \mean{e_\sigma}(1-\mean{e_\sigma})\le \bar{e}(1-\bar{e}).
\end{equation}
This ends the proof of Lemma~\ref{lemma3}.
\end{proof}

The state $\rho_\sigma^{2n}$ is composed by two independent parts: one part is the i.i.d. state $\sigma^{\otimes 2n-r}$ and the other part is an arbitrary state $\tilde{\rho}_\sigma^{r}$. If Alice and Bob measure the phase error of $\rho_\sigma^{2n}$, and the outcome of the number of phase error is $m$, then $m$ is also composed by two parts $m=m_1+m_2$ where $m_1$ is the number of phase error of state $\sigma^{\otimes 2n-r}$ and  $m_2$ is the number of phase error of state $\tilde{\rho}_\sigma^{r}$. We denote the probability distribution of the number of phase error of state $\rho_\sigma^{2n}$ as $\{\Pr\limits_{\mathcal{D}_{\sigma r}}(m)\}$ and the probability distribution of the number of phase error of state $\tilde{\rho}_\sigma^{r}$ as $\{\Pr\limits_{\mathcal{D}_{r}}(m_2)\}$. Obviously, the probability distribution of the number of phase error of state $\sigma^{\otimes 2n-r}$ is a $(2n-r)$-fold Bernoulli distribution where the success probability of obtaining outcome $'1'$ of every Bernoulli trial is $\mean{e_{\sigma}}$, and we denote this probability distribution as $\{\Pr\limits_{\mean{e_{\sigma}}}(m_1)\}$.

\begin{theorem}\label{theorem1}
Let $\Pr\limits_{\mean{e_{\tau}}}(m_1\ge\widebar{M})=\xi_\tau$ where $\{\Pr\limits_{\mean{e_{\tau}}}(m_1)\}$ is a $(2n-r)$-fold Bernoulli distribution where the success probability of obtaining outcome $'1'$ of every Bernoulli trial is $\mean{e_{\tau}}$, we have
\begin{equation}
P_1=\int_{\sigma\in\{\sigma:\mean{e_{\sigma}}\ge \mean{e_{\tau}}\}} P_{\sigma}d\sigma\le \frac{\varepsilon_e+2\varepsilon(r,k)}{\xi_\tau}.
\end{equation}
\end{theorem}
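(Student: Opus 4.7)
The plan is to chain Lemmas \ref{lemma2} and \ref{lemma1} to bound, uniformly for every $\sigma$ in the ``bad'' set $\{\sigma:\langle e_\sigma\rangle\ge\langle e_\tau\rangle\}$, the probability that $\rho_\sigma^{2n}$ produces at least $\widebar{M}$ phase errors. Once such a uniform lower bound of $\xi_\tau$ is established, a Markov-type averaging against $P_\sigma$ converts the upper bound on $\Pr\limits_{\{q_m\}}(m\ge\widebar{M})$ already recorded in Eq.~\eqref{prob3} into the advertised upper bound on $P_1$.

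First I would fix an arbitrary $\sigma$ with $\langle e_\sigma\rangle\ge\langle e_\tau\rangle$ and recall that, on the i.i.d.\ block $\sigma^{\otimes 2n-r}$ of $\rho_\sigma^{2n}$, the count $m_1$ of phase errors follows a $(2n-r)$-fold Bernoulli distribution with parameter $\langle e_\sigma\rangle$. Lemma \ref{lemma2} then gives $\Pr\limits_{\langle e_\sigma\rangle}(m_1\ge\widebar{M})\ge\Pr\limits_{\langle e_\tau\rangle}(m_1\ge\widebar{M})=\xi_\tau$. Next I apply Lemma \ref{lemma1} with $x_1=m_1$ (the Bernoulli part, $a_1=2n-r$, $c_1=\langle e_\sigma\rangle$) and $x_2=m_2$ (the arbitrary distribution of the number of phase errors obtained by measuring the leftover state $\tilde\rho_\sigma^{r}$, $a_2=r$). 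Its first inequality yields $\Pr\limits_{\mathcal{D}_{\sigma r}}(m\ge\widebar{M})\ge\Pr\limits_{\langle e_\sigma\rangle}(m_1\ge\widebar{M})\ge\xi_\tau$, valid for every $\sigma$ in the bad set.

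I then unfold the definition of $\{q_m\}$ using Eq.~\eqref{definit2}, which expresses $\tilde\rho^{2n}$ as a mixture $\int P_\sigma\,\rho_\sigma^{2n}\,d\sigma$. Because the phase-error projector commutes with this classical mixing,
\begin{equation*}
\Pr\limits_{\{q_m\}}(m\ge\widebar{M})=\int P_\sigma \Pr\limits_{\mathcal{D}_{\sigma r}}(m\ge\widebar{M})\,d\sigma\ge \int_{\sigma:\langle e_\sigma\rangle\ge\langle e_\tau\rangle} P_\sigma\,\xi_\tau\,d\sigma=P_1\,\xi_\tau,
\end{equation*}
where the inequality is obtained by discarding the (nonnegative) contribution from the complementary set and substituting the uniform lower bound just proved. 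Combining this with Eq.~\eqref{prob3}, which says $\Pr\limits_{\{q_m\}}(m\ge\widebar{M})\le\varepsilon_e+2\varepsilon(r,k)$, and dividing by $\xi_\tau>0$ delivers the theorem.

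The main obstacle is not analytic but organizational: one must keep the four probability distributions ($\{p_m\}$, $\{q_m\}$, $\{\Pr\limits_{\mathcal{D}_{\sigma r}}(m)\}$, $\{\Pr\limits_{\langle e_\sigma\rangle}(m_1)\}$) clearly separated, and justify that the POVM $\{\hat E_m\}$ restricted to $\rho_\sigma^{2n}=\sigma^{\otimes 2n-r}\otimes\tilde\rho_\sigma^{r}$ splits into independent sub-measurements on the two tensor factors, so that $m=m_1+m_2$ with $m_1$ and $m_2$ drawn independently from the Bernoulli part and the arbitrary part respectively. Once this factorization is in hand, Lemma \ref{lemma1} applies verbatim and the rest of the argument is a one-line averaging.
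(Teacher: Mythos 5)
Your proposal is correct and follows essentially the same route as the paper: Lemma \ref{lemma1} reduces the phase-error count of $\rho_\sigma^{2n}$ to its i.i.d.\ Bernoulli part, Lemma \ref{lemma2} gives the uniform lower bound $\xi_\tau$ on the bad set $\{\sigma:\mean{e_\sigma}\ge\mean{e_\tau}\}$, and restricting the integral over $P_\sigma$ to that set and invoking Eq.~\eqref{prob3} yields $P_1\xi_\tau\le\varepsilon_e+2\varepsilon(r,k)$. The only cosmetic difference is that you establish the pointwise bound for each fixed $\sigma$ before averaging, whereas the paper applies the two lemmas inside the chain of integrals; the logic is identical.
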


\begin{proof}
\begin{equation*}
\begin{split}
\Pr\limits_{\{q_m\}}(m\ge \widebar{M})=&\int P_{\sigma}\Pr\limits_{\mathcal{D}_{\sigma r}}(m\ge\widebar{M})d\sigma\\
\ge &\int_{\sigma\in\{\sigma:\mean{e_{\sigma}}\ge \mean{e_{\tau}}\}} P_{\sigma}\Pr\limits_{\mathcal{D}_{\sigma r}}(m\ge\widebar{M})d\sigma\\
\ge &\int_{\sigma\in\{\sigma:\mean{e_{\sigma}}\ge \mean{e_{\tau}}\}} P_{\sigma}\Pr\limits_{\mean{e_{\sigma}}}(m_1\ge\widebar{M})d\sigma\\
\ge &\int_{\sigma\in\{\sigma:\mean{e_{\sigma}}\ge \mean{e_{\tau}}\}} P_{\sigma}\Pr\limits_{\mean{e_{\tau}}}(m_1\ge\widebar{M})d\sigma\\
=&P_1\xi_\tau,
\end{split}
\end{equation*}
where we apply Lemma \ref{lemma1} for the second inequality and apply Lemma \ref{lemma2} for the third inequality. Combine with Eq.~\eqref{prob3}, we have $P_1\xi_\tau\le \varepsilon_e+2\varepsilon(r,k)$. This ends the proof of Theorem~\ref{theorem1}.
\end{proof}

For the state $\rho_\sigma^{2n}$, the OPER process be regarded as independently happening in two individual states: one is the state $\sigma^{\otimes 2n-2r}$ and the other is the state $\sigma^{\otimes r}\otimes\tilde{\rho}_\sigma^{r}$. If Alice and Bob measure the phase error of $\rho_\sigma^{2n}$ after OPER, and the outcome of the number of phase error is $m_s$, then $m_s$ is composed by two parts $m_s=m_{s1}+m_{s2}$ where $m_{s1}$ is the number of phase error of state $\sigma^{\otimes 2n-2r}$ after OPER and $m_{s2}$ is the number of phase error of state $\sigma^{\otimes r}\otimes\tilde{\rho}_\sigma^{r}$ after OPER. We denote the probability distribution of the number of phase error of state $\rho_\sigma^{2n}$ after OPER as $\{\Pr\limits_{\mathcal{D}_{\sigma r}^\prime}(m_s)\}$ and the probability distribution of the number of phase error of state $\sigma^{\otimes r}\otimes\tilde{\rho}_\sigma^{r}$ after OPER as $\{\Pr\limits_{\mathcal{D}_{r}^\prime}(m_{s2})\}$. Obviously, the probability distribution of the number of phase error of state $\sigma^{\otimes 2n-2r}$ after OPER is a $(n-r)$-fold Bernoulli distribution where the success probability of obtaining outcome $'1'$ of every Bernoulli trial is $\mean{E_{\sigma}}$, and we denote this probability distribution as $\{\Pr\limits_{\mean{E_{\sigma}}}(m_{s1})\}$.

\begin{theorem}\label{theorem2}
Let $\mean{E_{\tau}}=\mean{e_{\tau}}(1-\mean{e_{\tau}})$, and $\{\Pr\limits_{\mean{E_{\tau}}}(m_{s1})\}$ be a $(n-r)$-fold Bernoulli distribution where the success probability of obtaining outcome $'1'$ of every Bernoulli trial is $\mean{E_{\tau}}$. And let $\tilde{\xi}_\tau$ and $\widebar{M}_s$ satisfy $\Pr\limits_{\mean{E_{\tau}}}(m_{s1}\ge\widebar{M}_s-r)=\tilde{\xi}_\tau$. We have
\begin{equation}
\Pr\limits_{\{q_{m_s}^\prime\}}(m_s\ge \widebar{M}_s)\le \tilde{\varepsilon}_s=\tilde{\xi}_\tau+[1-\tilde{\xi}_\tau]\frac{\varepsilon_e+2\varepsilon(r,k)}{\xi_\tau}.
\end{equation}
\end{theorem}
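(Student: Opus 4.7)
The plan is to mirror the structure of the proof of Theorem~\ref{theorem1}, but now work with the ``post-OPER'' distribution $\{\Pr_{\mathcal{D}_{\sigma r}^{\prime}}(m_s)\}$ instead of the pre-OPER one. Starting from $\tilde{\rho}^{2n}=\int P_{\sigma}\rho_{\sigma}^{2n}\,d\sigma$, I would first write
\begin{equation*}
\Pr\limits_{\{q_{m_s}^{\prime}\}}(m_s\ge \widebar{M}_s)=\int P_{\sigma}\,\Pr\limits_{\mathcal{D}_{\sigma r}^{\prime}}(m_s\ge\widebar{M}_s)\,d\sigma,
\end{equation*}
and split the $\sigma$-integration into a ``bad'' region $\mathcal{B}=\{\sigma:\mean{e_\sigma}\ge\mean{e_\tau}\}$ and a ``good'' region $\mathcal{G}=\{\sigma:\mean{e_\sigma}<\mean{e_\tau}\}$. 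This is the natural dichotomy because all three lemmas we have available give useful bounds precisely when $\mean{e_\sigma}$ is controlled.

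On the bad region I would simply bound the integrand by $1$ and invoke Theorem~\ref{theorem1} to bound the total mass $\int_{\mathcal{B}}P_{\sigma}\,d\sigma = P_1 \le (\varepsilon_e+2\varepsilon(r,k))/\xi_\tau$. So the bad region contributes at most $P_1$.

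On the good region the work is more delicate, and this is where I expect the main (though still routine) obstacle to be: one must keep the bookkeeping straight. Because $\rho_\sigma^{2n}=\sigma^{\otimes 2n-r}\otimes\tilde{\rho}_\sigma^{r}$ and OPER acts by random two-by-two pairing, I would split the pairs as $\sigma^{\otimes 2n-2r}\otimes(\sigma^{\otimes r}\otimes\tilde{\rho}_\sigma^{r})$, so that after OPER the first factor produces at most $n-r$ survived pairs (with phase-error count $m_{s1}$) and the second factor produces at most $r$ survived pairs (with phase-error count $m_{s2}\le r$). Applying Lemma~\ref{lemma1} with $a_1=n-r$ and $a_2=r$ then yields
\begin{equation*}
\Pr\limits_{\mathcal{D}_{\sigma r}^{\prime}}(m_s\ge\widebar{M}_s)\le \Pr\limits_{\mean{E_\sigma}}(m_{s1}\ge\widebar{M}_s-r).
\end{equation*}
Next, since $\sigma\in\mathcal{G}$ means $\mean{e_\sigma}\le\mean{e_\tau}$, Lemma~\ref{lemma3} gives $\mean{E_\sigma}\le\mean{e_\tau}(1-\mean{e_\tau})=\mean{E_\tau}$, and then Lemma~\ref{lemma2} upgrades this to $\Pr_{\mean{E_\sigma}}(m_{s1}\ge\widebar{M}_s-r)\le \Pr_{\mean{E_\tau}}(m_{s1}\ge\widebar{M}_s-r)=\tilde{\xi}_\tau$. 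Hence the good region contributes at most $(1-P_1)\tilde{\xi}_\tau$.

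Combining the two pieces gives
\begin{equation*}
\Pr\limits_{\{q_{m_s}^{\prime}\}}(m_s\ge\widebar{M}_s)\le P_1+(1-P_1)\tilde{\xi}_\tau=\tilde{\xi}_\tau+(1-\tilde{\xi}_\tau)P_1\le \tilde{\xi}_\tau+(1-\tilde{\xi}_\tau)\frac{\varepsilon_e+2\varepsilon(r,k)}{\xi_\tau},
\end{equation*}
which is exactly the asserted $\tilde{\varepsilon}_s$. The only subtlety I anticipate is the pair-accounting in the OPER split (in particular the ``$-r$'' shift coming from the fact that the non-i.i.d. factor $\sigma^{\otimes r}\otimes\tilde{\rho}_\sigma^{r}$ contains $2r$ pairs and thus yields at most $r$ post-OPER outcomes); once this is settled, the chain of Lemmas~\ref{lemma1}--\ref{lemma3} and Theorem~\ref{theorem1} slot in directly, and the rest is arithmetic.
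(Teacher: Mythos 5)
Your proposal is correct and follows essentially the same route as the paper's proof: the same split into the regions $\mean{e_\sigma}\le\mean{e_\tau}$ and $\mean{e_\sigma}\ge\mean{e_\tau}$, the same use of Lemma~\ref{lemma1} with the $(n-r)$/$r$ pair accounting to get the $\widebar{M}_s-r$ shift, Lemmas~\ref{lemma3} and \ref{lemma2} on the good region, Theorem~\ref{theorem1} to bound $P_1$, and the identical final arithmetic $P_1+(1-P_1)\tilde{\xi}_\tau=\tilde{\xi}_\tau+(1-\tilde{\xi}_\tau)P_1$. The only (immaterial) difference is that the paper applies Lemma~\ref{lemma1} to the full integral before splitting off the bad region, whereas you split first and apply it only on the good region.
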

\begin{proof}
\begin{equation*}
\begin{split}
&\Pr\limits_{\{q_{m_s}^\prime\}}(m_s\ge \widebar{M}_s)=\int P_{\sigma}\Pr\limits_{\mathcal{D}_{\sigma r}^\prime}(m_s\ge\widebar{M}_s)d\sigma\\
&\le \int P_{\sigma}\Pr\limits_{\mean{E_{\sigma}}}(m_{s1}\ge\widebar{M}_s-r)d \sigma \\
&\le \int_{\sigma\in\{\sigma:\mean{e_{\sigma}}\le \mean{e_{\tau}}\}} P_{\sigma}\Pr\limits_{\mean{E_{\sigma}}}(m_{s1}\ge\widebar{M}_s-r)d\sigma+P_1\\
&\le \int_{\sigma\in\{\sigma:\mean{e_{\sigma}}\le \mean{e_{\tau}}\}} P_{\sigma}\Pr\limits_{\mean{E_{\tau}}}(m_{s1}\ge\widebar{M}_s-r)d\sigma+P_1\\
&= (1-P_1)\tilde{\xi}_\tau+P_1,
\end{split}
\end{equation*}
where we apply Lemma \ref{lemma1} for the first inequality, apply the fact that all probabilities are less than or equal to $1$ for the second inequality, and apply Lemma \ref{lemma2} and Lemma \ref{lemma3} for the third inequality. Combine with Theorem \ref{theorem1}, we have
\begin{equation}
\Pr\limits_{\{q_{m_s}^\prime\}}(m_s\ge \widebar{M}_s)\le\tilde{\xi}_\tau+[1-\tilde{\xi}_\tau]\frac{\varepsilon_e+2\varepsilon(r,k)}{\xi_\tau}.
\end{equation}
This ends the proof of Theorem \ref{theorem2}.
\end{proof}

With Theorem \ref{theorem1} and Theorem \ref{theorem2}, we can calculate a group of $(\widebar{M}_s,\varepsilon_s)$ that satisfies Eq.~\eqref{keyeq1}. For example, if we have known a group of $(\widebar{M},\varepsilon_e)$ that satisfies Eq.~\eqref{probb1}, we can set $\xi_\tau=10^{-2},\tilde{\xi}_\tau=10^{-10}$. By solving the equation $\Pr\limits_{\mean{e_{\tau}}}(m_1\ge\widebar{M})=10^{-2}$, we can get the value of $\mean{e_{\tau}}$. With the definition $\mean{E_{\tau}}=\mean{e_{\tau}}(1-\mean{e_{\tau}})$, by solving the equation $\Pr\limits_{\mean{E_{\tau}}}(m_{s1}\ge\widebar{M}_s-r)=10^{-10}$, we can get the value of $\widebar{M}_s$. Approximately, we have $\varepsilon_s=10^{-10}+10^2[\varepsilon_e+2\varepsilon(r,k)]+2\varepsilon(r,k)$ and
\begin{equation}\label{eqms}
\begin{split}
&\widebar{M}_s=(n-r)\mean{e_{\tau}}(1-\mean{e_{\tau}})+6.36\sqrt{(n-r)\mean{e_{\tau}}(1-\mean{e_{\tau}})}+r,\\
&\mean{e_{\tau}}=\frac{\widebar{M}-2.33\sqrt{\widebar{M}}}{2n-r}.
\end{split}
\end{equation}

\section{The finite key effect of SNS protocol with OPER}\label{protocol}
\subsection{The SNS protocol and its parameter estimation}
We consider the 4-intensity SNS protocol~\cite{yu2019sending,jiang2019unconditional}. In this protocol, Alice and Bob send $N$ pulse pairs to Charlie and get a series of data. In each time window, Alice (Bob) randomly chooses the decoy window or signal window with probabilities $1-p_{z}$ and $p_{z}$ respectively. If the decoy window is chosen, Alice (Bob) randomly prepares the pulse of vacuum state or WCS state $\ket{e^{i\theta_A}\sqrt{\mu_{1}}}$ or $\ket{e^{i\theta_A^\prime}\sqrt{\mu_{2}}}$ (vacuum state, or WCS state $\ket{e^{i\theta_B}\sqrt{\mu_{1}}}$ or $\ket{e^{i\theta_B^\prime}\sqrt{\mu_{2}}}$) with probabilities $p_{0}$, $p_{1}$ and $1-p_{0}-p_{1}$, respectively, where $\theta_A,\theta_A^\prime,\theta_B$ and $\theta_B^\prime$ are different in different windows, and are random in $[0,2\pi)$. If the signal window is chosen, Alice (Bob) randomly chooses bit $1$ or $0$ ($0$ or $1$) with probabilities $\epsilon$ and $1-\epsilon$, respectively. If bit $1$ ($0$) is chosen, Alice (Bob) prepares a phase-randomized WCS pulse with intensity $\mu_{z}$. If bit $0$ ($1$) is chosen, Alice (Bob) prepares a vacuum pulse. This is said to be the sending or not-sending. For Alice (Bob), bit value 1 (0) is due to her (his) decision on sending out a phase randomized coherent state with intensity $\mu_z$ for Charlie, while bit value 0 (1) is corresponding to her (his) decision of not-sending, i.e., sending out a vacuum.

Then Alice and Bob send their prepared pulses to Charlie. Charlie is assumed to perform interferometric measurements on the received pulses and announces the measurement results to Alice and Bob. If one and only one detector clicks in the measurement process, Charlie also tells Alice and Bob which detector clicks, and Alice and Bob take it as an one-detector heralded event. Alice and Bob repeat the above process for $N$ times and collect all the data with one-detector heralded events and discard all the others.

The next process is the parameter estimation. To clearly show how this process is carried out, we have the following definitions.

\noindent\textbf{Definition 2}. If both Alice and Bob choose the signal window, it is a $Z$ window. If both Alice and Bob choose the decoy window, and the states Alice and Bob prepared are in the same intensity and their phases satisfy the post-selection criterion~\cite{hu2019general}, it is an $X$ window. The one-detector heralded events of $X$ windows and $Z$ windows are called effective events. And Alice and Bob respectively get $n_t-$bit strings, $Z_A$ and $Z_B$, formed by the corresponding bits of effective events of $Z$ windows.

\noindent\textbf{Definition 3}. For an effective event in the $Z$ windows, if it is caused by the event that only one party of Alice and Bob decides sending out a phase-randomized WCS pulse and he (she) actually sends out a single photon state from the view point of decoy state method, it is an untagged event. Its corresponding bit is an untagged bit.

\textbf{Parameter estimation~\cite{yu2019sending,hu2019general,jiang2019unconditional}.} The data of all the one-detector heralded events except the effective events of $Z$ windows in this protocol are used to estimate the expected value of the lower bounds of the number of untagged $0$-bits, $\mean{\underline{n_{01}}}$, and untagged $1$-bits, $\mean{\underline{n_{10}}}$. The details of how to estimate the lower bound of $\mean{\underline{n_{01}}}$ and $\mean{\underline{n_{10}}}$ are shown in Appendix A. We denote the expected value of the lower bounds of the number of untagged bits as $\mean{\underline{n_{1}}}=\mean{\underline{n_{01}}}+\mean{\underline{n_{10}}}$. 

Also the one-detector heralded events could be used to estimate the expected value of the upper bound of phase-flip error rate, $\mean{\overline{e_1^{ph}}}$, of the untagged events. There are two methods to estimate $\mean{\overline{e_1^{ph}}}$. The first method is shown in Eq.~\eqref{e1} of Appendix A, where we apply the Chernoff bound for two times to estimate the upper bound of the first term and the lower bound of the second term of the numerator, and this is also the methods we use in our prior articles~\cite{yu2019sending,jiang2019unconditional}. The second method is shown in Appendix B, where we apply the improved version of McDiarmid inequality~\cite{chau2019application} to the numerator of  Eq.~\eqref{e1}. The McDiarmid inequality allows us to treat the numerator as a whole, and directly get the upper bound of the numerator in a fixed failure probability. This method effectively reduces the effect of statistical fluctuation of estimating $\mean{\overline{e_1^{ph}}}$ and improves the key rate especially when the key size is small. The details of the second method are shown in Appendix B.

\subsection{The data post-processing of SNS protocol}\label{datapost}
The data post-processing of SNS protocol contains two independent processes: the error correction and the privacy amplification. The goal of error correction is to correct all the different bits in $Z_A$ and $Z_B$. And the privacy amplification is to distill a shorter but more secure final key from the raw key according to the formula of key rate.

If we directly correct the different bits in $Z_A$ and $Z_B$, a large number of raw keys would be cost and decrease the key rate. The OPER could be used to improve the key rate. And further, we can use the active OPER, which is also called active odd-parity paring (AOPP) in Ref.~\cite{xu2019general} to further improve the key rate. And the security of AOPP is equivalent to the security of OPER~\cite{xu2019general}. In AOPP, Bob first actively pairs the bits $0$ with bits $1$ of the raw key string $Z_B$, and get $2n_g$ pairs, and then randomly splits those pairs into two equal parts. For each part, Alice computes the parities of those $n_g$ pairs and announces them to Bob, then Alice and Bob keep the pairs with parity $1$ and discard the pairs with parity $0$. Finally, Alice and Bob randomly keep one bits from those survived pairs and form two new $n_t^\prime$-bits strings, which would be used to perform error correction and privacy amplification. As shown in Ref.~\cite{xu2019general}, there is no mutual information between the split two parts and each of those two parts alone could be regarded as the result of a virtual OPER process performed on $un_t$ bits of $Z_A$ and $Z_B$, where
\begin{equation}
u=\frac{n_g}{n_{odd}},
\end{equation}
where ${n_{odd}}$ is the number of pairs with odd parity if Bob randomly groups all the bits in $Z_B$ two by two, and both $n_g$ and $n_{odd}$ are observed values in practice. Thus we could first get the formula of phase-flip error rate of the survived untagged bits after OPER, and then apply this formula to each part of AOPP. 


In Sec.~\ref{phaseflip}, we have shown how to get the phase-flip error rate of the survived untagged bits after OPER, and here we would show how to get the values of $n,k,r$ and so on in the instance of SNS protocol. The value of $n$ is the number of untagged bit pairs after random pairing, where the untagged bit pairs are the pairs formed by two untagged bits. And we have
\begin{equation}\label{esn}
n=\varphi^L(\frac{\mean{\underline{n_{1}}}}{n_t}\frac{\mean{\underline{n_{1}}}}{n_t}\frac{un_t}{2}),
\end{equation}
where $\varphi^L(x)$ is defined in Eq.~\eqref{observL}. The value of $k$ is the number of neglected untagged bits, thus we can take the untagged bits that are paired with tagged bits as the neglected bits, thus we have
\begin{equation}\label{esk}
k=\varphi^L(u\mean{\underline{n_{1}}}-\frac{\mean{\underline{n_{1}}}}{n_t}\frac{\mean{\underline{n_{1}}}}{n_t}un_t).
\end{equation} 
With Eq.\eqref{e1}, we can get the value of $\mean{\overline{e_1^{ph}}}$. Particularly, the failure probability of Chernoff bound used in Eq.\eqref{e1} is set as $10^{-13}$. And we have
\begin{equation}
\widebar{M}=\varphi^U(2n\mean{\overline{e_1^{ph}}}),
\end{equation}
where $\varphi^U(x)$ is defined in Eq.~\eqref{observ} and we also set the failure probability as $10^{-13}$, thus $\varepsilon_e=3\times 10^{-13}$. All the failure probabilities of Chernoff bound used in other equations are set as $10^{-10}$. We set $\varepsilon(r,k)=10^{-13}$ and we have
\begin{equation}
r=\frac{2n+k}{k}\ln\frac{3k^2}{\varepsilon(r,k)}.
\end{equation} 
We set $\xi_\tau=10^{-2},\tilde{\xi}_\tau=10^{-10}$, and then we can calculate the value of $\widebar{M}_s$ with Eq.\eqref{eqms}. And we have $\varepsilon_{s}=1.5\times 10^{-10}$. Finally, we get the upper bound of the phase-flip error rate of the survived untagged bits after OPER with a failure probability $\varepsilon_{s}=1.5\times 10^{-10}$, which is 
\begin{equation}
e_{1}^{\prime ph}=\frac{\widebar{M}_s}{n_1^\prime},
\end{equation}  
where $n_1^\prime$ is the number of the survived untagged bits after OPER, and 
\begin{equation}
n_1^\prime=\varphi^L(\frac{\mean{\underline{n_{01}}}}{n_t}\frac{\mean{\underline{n_{10}}}}{n_t}un_t).
\end{equation}

With all those values, we now can calculate the key rate $R$ of AOPP with the formulas in Ref.~\cite{jiang2019unconditional}, which is 
\begin{equation}\label{r2}
\begin{split}
R=&\frac{2}{N}\{n_1^\prime[1-h(e_{1}^{\prime ph})]-fn_t^\prime h(E^\prime)-\log_2{\frac{2}{\varepsilon_{cor}}}-2\log_2{\frac{1}{\sqrt{2}\varepsilon_{PA}\hat{\varepsilon}}}\}.
\end{split}
\end{equation}
where $h(x)=-x\log_2x-(1-x)\log_2(1-x)$ is the Shannon entropy, $E^\prime$ is the bit-flip error rate of the remain bits after AOPP, $\varepsilon_{cor}$ is the failure probability of error correction, $\varepsilon_{PA}$ is the failure probability of privacy amplification, and $\hat{\varepsilon}$ is the coefficient while using the chain rules of smooth min- and max- entropy~\cite{vitanov2013chain}. And we set $\varepsilon_{cor}=\varepsilon_{PA}=\hat{\varepsilon}=10^{-10}$.

With the formula of Eq.~\eqref{r2}, the protocol is $2\varepsilon_{tol}$-secure, and $\varepsilon_{tol}=\varepsilon_{cor}+\varepsilon_{sec}$, where $\varepsilon_{sec}=2\hat{\varepsilon}+4\varepsilon_s+\varepsilon_{PA}+\varepsilon_{n_1^\prime}+\varepsilon_{nk}$. Here, $\varepsilon_{n_1^\prime}$ is the probability that the real value of the number of survived untagged bits is smaller than $n_1^\prime$, $\varepsilon_{nk}$ is the failure probability while we estimated the value of $n$ and $k$ in Eqs.~\eqref{esn} and \eqref{esk}. And we have $\varepsilon_{n_1^\prime}=6\times 10^{-10}$ for we use the Chernoff bound for $6$ times to estimate $n_1^\prime$, and with the similar reason, we have $\varepsilon_{nk}=2\times 10^{-10}$. Finally, we have $\varepsilon_{tol}=1.8\times 10^{-9}$.

\section{The asymmetric SNS protocol with OPER and finite key effect}\label{asyy}
In the practical application of SNS protocol, the distance between Alice and Charlie can be different from the distance between Bob and Charlie. To solve this problem, the theory of asymmetric SNS protocol is proposed in Ref.~\cite{hu2019general}. The preparation and measurement step of asymmetric SNS protocol is the same with that of the original SNS protocol, except the source parameters of Alice and Bob in asymmetric SNS protocol are not the same. In this part, we use the subscript $^\prime$ to indicate Bob's source parameters. For example, $p_z$ is the probability that Alice chooses the signal window and $p_z^\prime$ is the probability that Bob chooses the signal window; $\mu_z$ is the intensity of phase-randomized WCS if Alice decides sending in her signal windows and $\mu_z^\prime$ is the intensity of phase-randomized WCS if Bob decides sending in his signal windows.

Note that the original SNS protocol~\cite{wang2018twin} and its improved one~\cite{xu2019general} based on symmetric source parameters for Alice and Bob, i.e., they use the same values for the sending probabilities and light intensities. As was shown in Ref.~\cite{hu2019general}, the SNS protocol is also secure with asymmetric source parameters given the following mathematical constraint:
\begin{equation}
\frac{\mu_{1}}{\mu_{1}^\prime}=\frac{\epsilon(1-\epsilon^\prime)\mu_{z}e^{-\mu_{z}}}{\epsilon^\prime(1-\epsilon)\mu_{z}^\prime e^{-\mu_{z}^\prime}}.
\end{equation}
With this condition, light intensity chosen by Alice and that chosen by Bob can be different.

After Alice and Bob repeat the preparation and measurement steps of asymmetric SNS protocol for $N$ times, they can perform the same error correction and privacy amplification steps as shown in Sec.~\ref{datapost}. And the formula of key rate is the same as Eq.~\eqref{r2}. The major differences between the original SNS protocol and asymmetric SNS protocol are the forms of the formulas of the lower bound of the number of untagged bits and the upper bound of phase-flip error rate, which are shown in the appendix \ref{calculation}.

\section{Numerical simulation}\label{simulation}
In this part, we show the results of numerical simulation of SNS protocol with AOPP, including the symmetric and asymmetric cases, and compare with the results of the original SNS protocol~\cite{jiang2019unconditional,hu2019general}. The results of this work (A) are calculated by the methods shown in Appendix A, and the results of this work (B) are calculated by the similar methods except that we apply the methods shown in Appendix B to estimate the phase-flip error rate. The two bounds of repeater-less key rate used here are called 'PLOB-1' and 'PLOB-2'~\cite{pirandola2017fundamental}, where PLOB-1 is the absolute PLOB bound and 'PLOB-2' is the practical bound. 


\begin{table}
\begin{ruledtabular}
\begin{tabular}{ccccc}
$p_d$& $e_d$ &$\eta_d$ & $f$ & $\alpha_f$ \\
\hline
$1.0\times10^{-8}$& $3\%$  & $30.0\%$ & $1.1$ & $0.2$ \\ 
\end{tabular}
\end{ruledtabular}
\caption{List of experimental parameters used in numerical simulations. Here $p_d$ is the dark count rate of Charlie's detectors; $e_d$ is the misalignment-error probability; $\eta_d$ is the detection efficiency of Charlie's detectors; $f$ is the error correction inefficiency; $\alpha_f$ is the fiber loss coefficient ($dB/km$).}\label{exproperty}
\end{table}

We use the linear model to simulate the observed values of experiment such as $S_{\kappa\zeta}$ which are defined in Appendix \ref{calculation}~\cite{jiang2019unconditional}, with the experimental parameters list in Table.~\ref{exproperty}. And the simulation method of $n_g,n_t^\prime,n_{ood}$ and $E^\prime$ are shown in Appendix \ref{simulationmethod}.  Without loss of generality, we assume the property of Charlie's two detectors are the same. The distance between Alice and Charlie is $L_A$, and the distance between Bob and Charlie is $L_B$. The total distance between Alice and Bob is $L=L_A+L_B$. In our numerical simulation, we set $L_A=L_B$ for the symmetric case and $L_A-L_B=$constant for the asymmetric case.

\begin{figure}
\centering
\includegraphics[width=8cm]{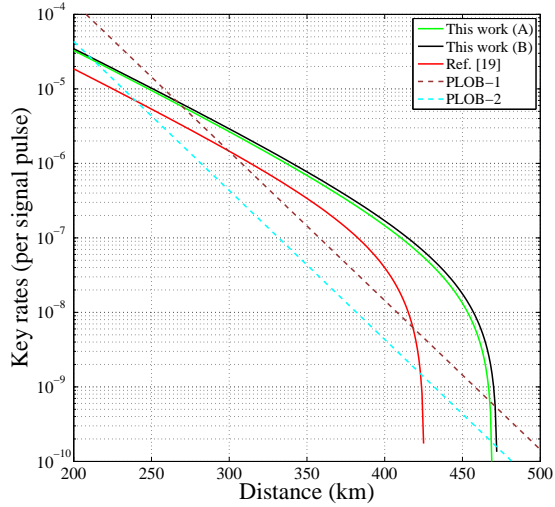}
\caption{The optimal key rates (per pulse) versus transmission distance (the distance between Alice and Bob) with total number of pulses $N=10^{12}$. Here we set $L_A=L_B$ and the source parameters of Alice and Bob are all the same. The experimental parameters that we used in the numerical simulation are listed in Table \ref{exproperty}. The black solid line is the optimized results of this work (B). The green solid line is the optimized results of this work (A). And the red solid line is the optimized results of Ref.~\cite{jiang2019unconditional}.}\label{figure1}
\end{figure}

\begin{figure}
\centering
\includegraphics[width=8cm]{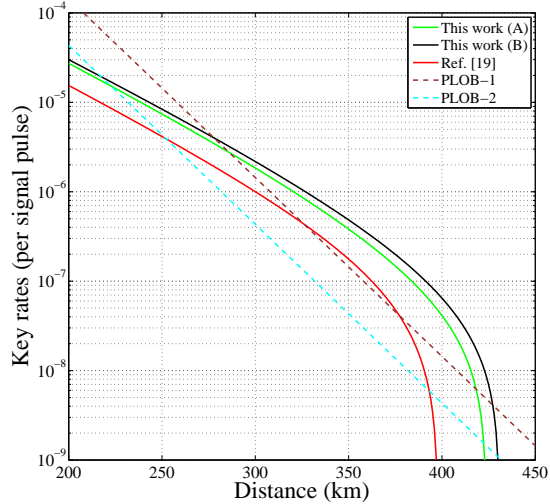}
\caption{The optimal key rates (per pulse) versus transmission distance with total number of pulses $N=10^{11}$. Here we set $L_A=L_B$ and the source parameters of Alice and Bob are all the same. The experimental parameters that we used in the numerical simulation are listed in Table \ref{exproperty}. The black solid line is the optimized results of this work (B). The green solid line is the optimized results of this work (A). And the red solid line is the optimized results of Ref.~\cite{jiang2019unconditional}. }\label{figure2}
\end{figure}

\begin{figure}
\centering
\includegraphics[width=8cm]{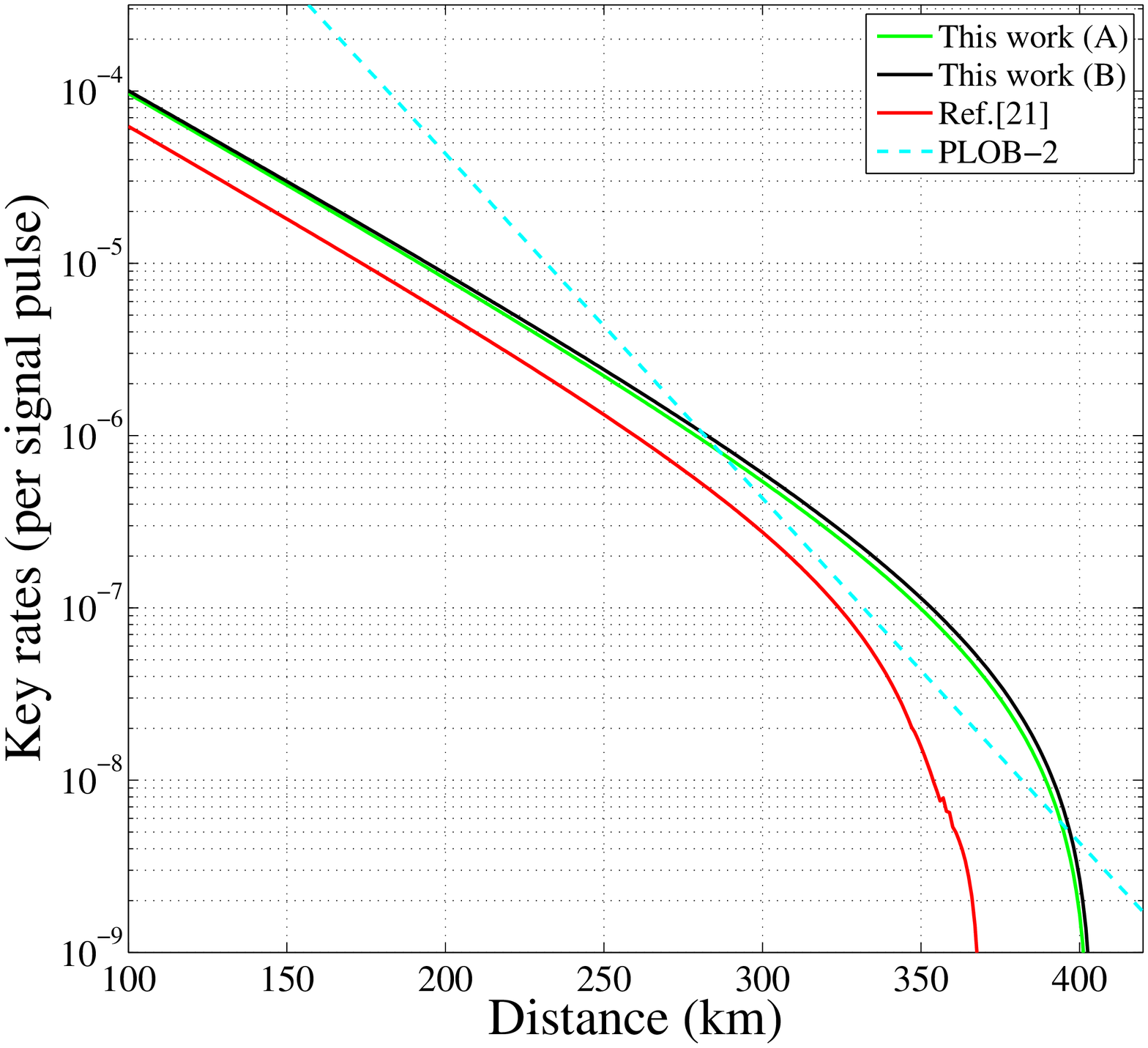}
\caption{The optimal key rates (per pulse) versus transmission distance with total number of pulses $N=10^{12}$. Here we set $L_A-L_B=100$ km. The experimental parameters that we used in the numerical simulation are listed in Table \ref{exproperty}. The black solid line is the optimized results of this work (B). The green solid line is the optimized results of this work (A). And The red solid line is the optimized results of Ref.~\cite{hu2019general}. }\label{figure3}
\end{figure}

\begin{figure}
\centering
\includegraphics[width=8cm]{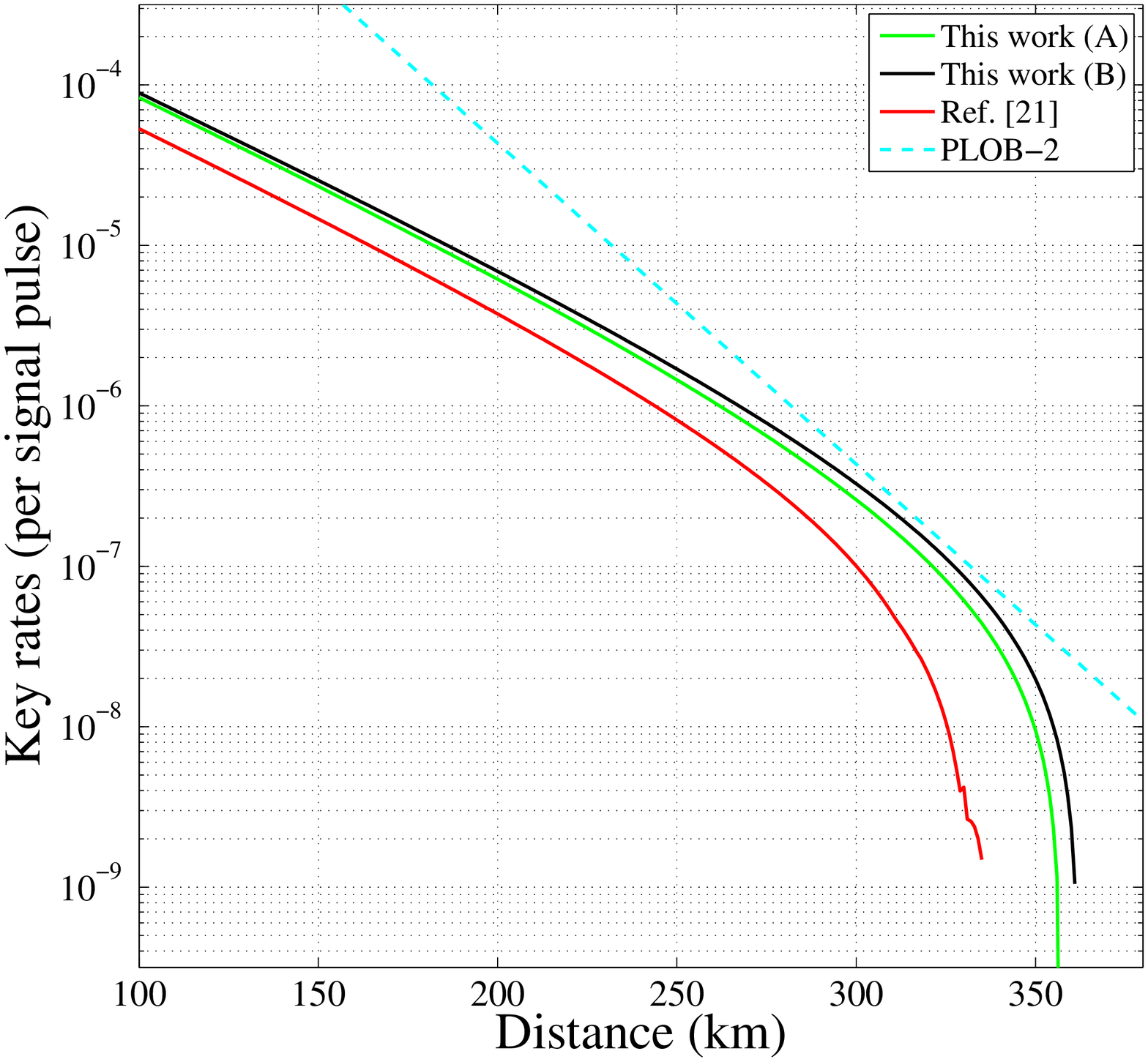}
\caption{The optimal key rates (per pulse) versus transmission distance with total number of pulses $N=10^{11}$. Here we set $L_A-L_B=100$ km. The experimental parameters that we used in the numerical simulation are listed in Table \ref{exproperty}. The black solid line is the optimized results of this work (B). The green solid line is the optimized results of this work (A). And The red solid line is the optimized results of Ref.~\cite{hu2019general}.}\label{figure4}
\end{figure}

Figure~\ref{figure1} and Figure~\ref{figure2} are our simulation results of this work and Ref.~\cite{jiang2019unconditional} with the experimental parameters list in Table.~\ref{exproperty}. In Fig.~\ref{figure1} and Fig.~\ref{figure2}, we set $L_A=L_B$ and the source parameters of Alice and Bob are all the same. The dashed brown lines in Fig.~\ref{figure1} and Fig.~\ref{figure2} are the results of absolute PLOB bound, which bounds the key rate of repeater-less QKD with whatever devices, such as perfect detection device. The cyan dashed lines are the results of practical PLOB bound, which assumes a limited detection efficiency as listed in Table~\ref{exproperty}. The results show that our method can obviously exceed the absolute PLOB bound. We set $N=1.0\times 10^{11}$ in Fig.~\ref{figure2}, which is a more practical number of total pulse in experiment, and results show that our method still obviously exceeds the absolute PLOB bound, while the original SNS protocol just exceeds the absolute PLOB bound a little. The results of applying the improved version of McDiarmid inequality to estimate the phase-flip error rate are higher about $10\%$ while $N=1.0\times 10^{12}$ and about $20\%$ while $N=1.0\times 10^{11}$ than the methods shown in Appendix A.

Figure~\ref{figure3} and Figure~\ref{figure4} are our simulation results of this work and Ref.~\cite{hu2019general} with the experimental parameters list in Table.~\ref{exproperty}. In Fig.~\ref{figure3} and Fig.~\ref{figure4}, we set $L_A-L_B=100$ km. From Fig.~\ref{figure3} and Fig.~\ref{figure4}, we can clearly see that the our method in the case of finite key size can greatly improved the key rate of SNS protocol with asymmetric channels, especially when the channel loss is large.

\begin{table}
\begin{ruledtabular}
\begin{tabular}{ccccc}
key rate&$250$ km & $390$ km & $420$ km &$440$ km\\
\hline
this work (A)&$9.52\times 10^{-6}$&$2.05\times10^{-7}$ &$6.84\times 10^{-8}$& $2.59\times 10^{-8}$\\
\hline
this work (B)&$1.02\times 10^{-5}$&$2.36\times10^{-7}$&$8.15\times 10^{-8}$& $3.26\times 10^{-8}$\\
\hline
Ref.~\cite{xu2019general}&$4.77\times 10^{-6}$&$1.02\times10^{-7}$&$3.57\times 10^{-8}$& $1.49\times 10^{-8}$\\
\hline
Ref.~\cite{jiang2019unconditional}&$5.35\times 10^{-6}$&$7.05\times10^{-8}$&$4.21\times 10^{-9}$& $0$\\
\hline
Ref.~\cite{maeda2019repeaterless}&$5.59\times 10^{-6}$&$1.16\times10^{-8}$& $0$&$0$\\
\hline
PLOB-2 &  $4.33\times 10^{-6}$  & $6.86\times10^{-9}$  &$1.72\times 10^{-9}$& $ 6.86\times10^{-10}$\\
\hline
PLOB-1 &  $1.44\times 10^{-5}$  & $2.29\times10^{-8}$  &$5.74\times 10^{-9}$& $ 2.29\times10^{-9}$
\end{tabular}
\end{ruledtabular}
\caption{The key rates of this work, Refs.~\cite{xu2019general},~\cite{jiang2019unconditional} and \cite{maeda2019repeaterless}, and the PLOB bounds. The method of Ref.~\cite{xu2019general} used here is the standard error rejection. The parameters used here are the same as that of Figure.~\ref{figure1}.}\label{absolute}
\end{table}

Table.~\ref{absolute} is the comparison of the key rates of this work, Refs.~\cite{xu2019general},~\cite{jiang2019unconditional} and \cite{maeda2019repeaterless}, and the PLOB bounds. The method of Ref.~\cite{xu2019general} used here are the standard error rejection. The parameters used here is the same as that of Figure.~\ref{figure2}. The key rates show that the method of this work improves the key rate by more than $1$ times compared with our prior work ~\cite{jiang2019unconditional}, and exceeds the results of Ref. \cite{jiang2019unconditional,maeda2019repeaterless} in all distances. Besides, with the method here, the SNS protocol can by far break the absolute key rate limit of the repeater-less QKD and even reach more than $40$ times of the practical PLOB bound and $13$ times of the absolute PLOB bound with $10^{12}$ pulses.

\begin{table}
\begin{ruledtabular}
\begin{tabular}{ccc}
key rate&$402$ km & $502$ km\\
\hline
this work (A)&$9.98\times 10^{-8}$&$4.82\times10^{-8}$\\
\hline
this work (B)&$1.07\times 10^{-7}$&$5.38\times10^{-8}$\\
\hline
Ref.~\cite{fang2019surpassing}&$1.44\times 10^{-8}$&$1.68\times10^{-9}$
\end{tabular}
\end{ruledtabular}
\caption{The key rates of this work and Ref.~\cite{fang2019surpassing}. We use the parameters of Ref.~\cite{fang2019surpassing} in calculations, e.g., the dark count rate is $p_d=3.36\times 10^{-8}$, the misalignment-error probability is $e_d=7\%$, the detection efficiency is $\eta_d=20\%$, the fiber loss is $\alpha_f=0.185$, the failure probability is $\xi=1.69\times 10^{-10}$, and the total number pulse is $N=2.0\times 10^{13}$ for the distance of $402$ km, and $p_d=1.26\times 10^{-8}$, $e_d=9.8\%$, $\eta_d=29\%$, $\alpha_f=0.162$, $\xi=1.71\times 10^{-10}$, and $N=2.0\times 10^{13}$ for the distance of $502$ km.}\label{comp}
\end{table}

Table.~\ref{comp} is the key rates of this work and Ref.~\cite{fang2019surpassing}. We use the parameters of Ref.~\cite{fang2019surpassing} in calculations, which are $p_d=3.36\times 10^{-8}$, $e_d=7\%$, $\eta_d=20\%$, $\alpha_f=0.185$, $\xi=1.69\times 10^{-10}$, and $N=2.0\times 10^{13}$ for the distance of $402$ km, and $p_d=1.26\times 10^{-8}$, $e_d=9.8\%$, $\eta_d=29\%$, $\alpha_f=0.162$, $\xi=1.71\times 10^{-10}$, and $N=2.0\times 10^{13}$ for the distance of $502$ km. The results in Table.~\ref{comp} show that the key rates of this work are more than $30$ times that of Ref.~\cite{fang2019surpassing}.

\section{Conclusion}
In this paper, we propose a zigzag approach to verify the phase-flip error of the survived bits after OPER. Based on this, we can take all the finite key effect efficiently in calculating the non-asymptotic key rate. The numerical results show that active OPER can greatly improve the key rate of SNS protocol for both the asymmetric and symmetric channels, and unconditionally break the absolute key rate limit of repeater-less quantum key distribution. Our results can directly be used to the SNS experiments.

{\bf{Acknowledgement:}} We acknowledge the financial support in part by Ministration of Science and Technology of China through The National Key Research and Development Program of China grant No. 2017YFA0303901; National Natural Science Foundation of China grant No. 11474182, 11774198 and U1738142.

\appendix

\section{The calculation method}\label{calculation}
The calculation methods of $\mean{\underline{n_{01}}}$, $\mean{\underline{n_{10}}}$, and $\mean{\overline{e_1^{ph}}}$ are similar with that in Refs~\cite{yu2019sending,jiang2019unconditional,hu2019general,xu2019general}. And the formulas of asymmetric SNS protocol are more general. We can easily get the formulas of original SNS protocol from that of asymmetric SNS protocol by setting the same source parameters of Alice and Bob, that is, drop the subscript $^\prime$ in the formulas.

To clearly show the calculation method, we denote Alice's sources $\ket{0}$, $\ket{e^{i\theta_A}\sqrt{\mu_{1}}}$ and $\ket{e^{i\theta_A^\prime}\sqrt{\mu_{2}}}$ as $o,x$, and $y$. Similarly, we denote Bob's sources $\ket{0}$, $\ket{e^{i\theta_B}\sqrt{\mu_{1}^\prime}}$, and $\ket{e^{i\theta_B^\prime}\sqrt{\mu_{2}^\prime}}$ as $o^\prime,x^\prime$, and $y^\prime$. We denote the number of pulse pairs of source $\kappa\zeta(\kappa=o,x,y;\zeta=o^\prime,x^\prime,y^\prime)$ sent out in the whole protocol as $N_{\kappa\zeta}$, and the total number of one-detector heralded events of source $\kappa\zeta$ as $n_{\kappa\zeta}$. We define the counting rate of source $\kappa\zeta$ as $S_{\kappa\zeta}=n_{\kappa\zeta}/N_{\kappa\zeta}$, and the corresponding expected value as $\mean{S_{\kappa\zeta}}$. With all those definitions, we have
\begin{equation}
\begin{split}
N_{oo^\prime}=&\{(1-p_{z})[(1-p_{z}^\prime)p_{0}p_{0}^\prime+p_{z}^\prime p_{0}(1-\epsilon{^\prime})]+p_{z}(1-p_{z}^\prime)(1-\epsilon)p_{0}^\prime\}N\\
N_{ox^\prime}=&(1-p_{z}^\prime)p_{1}^\prime[(1-p_{z})p_{0}+p_{z}(1-\epsilon)]N\\
N_{xo^\prime}=&(1-p_{z})p_{1}[(1-p_{z}^\prime)p_{0}^\prime+p_{z}^\prime(1-\epsilon{^\prime})]N\\
N_{oy^\prime}=&(1-p_{z}^\prime)(1-p_{0}^\prime-p_{1}^\prime)[(1-p_{z})p_{0}+p_{z}(1-\epsilon)]N\\
N_{yo^\prime}=&(1-p_{z})(1-p_{0}-p_{1})[(1-p_{z}^\prime)p_{0}^\prime+p_{z}^\prime(1-\epsilon{^\prime})]N
\end{split}
\end{equation}

As sources $x,y,x^\prime,y^\prime$ are phase-randomized WCS sources, they are actually the classical mixture of different photon number states~\cite{hu2019general}. Thus we can use the decoy-state method to calculate the lower bounds of the expected values of the counting rate of states $\oprod{01}{01}$ and $\oprod{10}{10}$, which are
\begin{align}
\label{s01mean}\mean{\underline{s_{01}}}&= \frac{\mu_{2}^{\prime 2}e^{\mu_{1}^\prime}\mean{S_{ox^\prime}}-\mu_{1}^{\prime 2}e^{\mu_{2}^\prime}\mean{S_{oy^\prime}}-(\mu_{2}^{\prime 2}-\mu_{1}^{\prime 2})\mean{S_{oo^\prime}}}{\mu_{2}^\prime\mu_{1}^\prime(\mu_{2}^\prime-\mu_{1}^\prime)},\\
\mean{\underline{s_{10}}}&= \frac{\mu_{2}^2e^{\mu_{1}}\mean{S_{xo^\prime}}-\mu_{1}^2e^{\mu_{2}}\mean{S_{yo^\prime}}-(\mu_{2}^2-\mu_{1}^2)\mean{S_{oo^\prime}}}{\mu_{2}\mu_{1}(\mu_{2}-\mu_{1})}.
\end{align}
Then we can get the lower bound of the expected value of the counting rate of untagged photons
\begin{equation}
\mean{\underline{s_1}}=\frac{\mu_{1}}{\mu_{1}+\mu_{1}^\prime}\mean{\underline{s_{10}}}+\frac{\mu_{1}^\prime}{\mu_{1}+\mu_{1}^\prime}\mean{\underline{s_{01}}},
\end{equation}
and
\begin{align}
\mean{n_{10}}=Np_{z}p_{z}^\prime\epsilon(1-\epsilon{^\prime})\mu_{z}e^{-\mu_{z}}\mean{\underline{s_{10}}},\\
\mean{n_{01}}=Np_{z}p_{z}^\prime\epsilon^\prime(1-\epsilon)\mu_{z}^\prime e^{-\mu_{z}^\prime}\mean{\underline{s_{01}}}.
\end{align}

The error counting rate of the $X$ windows where Alice and Bob decide to prepare the pulses with intensities $\mu_1$ and $\mu_1^\prime$, $T_{X1}$, can be used to estimate $\mean{\overline{e_1^{ph}}}$. The criterion of error events in $X$ windows are shown in Ref.~\cite{hu2019general}. We denote the number of total pulses with intensities $\mu_1$ and $\mu_1^\prime$ send out in the $X$ windows as $N_{X1}$, and the number of corresponding error events as $m_{X1}$, then we have
\begin{equation}
T_{X1}=\frac{m_{X1}}{N_{X1}}.
\end{equation}
The upper bound of the expected value of $e_1^{ph}$ is given by
\begin{equation}\label{e1}
\mean{\overline{e_1^{ph}}}=\frac{\mean{T_{X1}}-e^{-\mu_{1}-\mu_{1}^\prime}\mean{S_{oo^\prime}}/2}{e^{-\mu_{1}-\mu_{1}^\prime}(\mu_{1}+\mu_{1}^\prime)\mean{\underline{s_1}}},
\end{equation}
where $\mean{T_{X1}}$ is the expected value of $T_{X1}$.

The Eqs.\eqref{s01mean}-\eqref{e1} are represented by expected values, but the values we get in experiment are observed values. To close the gap between the expected values and observed values, we need Chernoff bound~\cite{jiang2017measurement,chernoff1952measure}. Let $\mathcal{X}$ denote the sum of $n$ independent random variables with outcomes $0$ or $1$. $\phi$ is the expected value of $\mathcal{X}$. We have
\begin{align}
\label{mul}\phi^L(\mathcal{X})=&\frac{\mathcal{X}}{1+\delta_1(\mathcal{X})},\\
\label{muu}\phi^U(\mathcal{X})=&\frac{\mathcal{X}}{1-\delta_2(\mathcal{X})},
\end{align}
where we can obtain the values of $\delta_1(\mathcal{X})$ and $\delta_2(\mathcal{X})$ by solving the following equations
\begin{align}
\label{delta1}\left(\frac{e^{\delta_1}}{(1+\delta_1)^{1+\delta_1}}\right)^{\frac{\mathcal{X}}{1+\delta_1}}&=\frac{\xi}{2},\\
\label{delta2}\left(\frac{e^{-\delta_2}}{(1-\delta_2)^{1-\delta_2}}\right)^{\frac{\mathcal{X}}{1-\delta_2}}&=\frac{\xi}{2},
\end{align}
where $\xi$ is the failure probability. Thus we have
\begin{equation}\label{sjklower}
\phi^L({N_{\alpha\beta}S_{\alpha\beta}})=N_{\alpha\beta}\mean{\underline{S}_{\alpha\beta}},\phi^U({N_{\alpha\beta}S_{\alpha\beta}})=N_{\alpha\beta}\mean{\overline{S}_{\alpha\beta}}.
\end{equation}

Besides, we can use the Chernoff bound to help us estimate their real values from their expected values. Similar to Eqs.~\eqref{mul}- \eqref{delta2}, the observed value, $\varphi$, and its expected value, $\mathcal{Y}$, satisfy
\begin{align}
\label{observ}&\varphi^U(\mathcal{Y})=[1+\delta_1^\prime(\mathcal{Y})]\mathcal{Y},\\
\label{observL}&\varphi^L(\mathcal{Y})=[1-\delta_2^\prime(\mathcal{Y})]\mathcal{Y},
\end{align}
where we can obtain the values of $\delta_1^\prime(\mathcal{Y})$ and $\delta_2^\prime(\mathcal{Y})$ by solving the following equations
\begin{align}
\left(\frac{e^{\delta_1^\prime}}{(1+\delta_1^\prime)^{1+\delta_1^\prime}}\right)^{\mathcal{Y}}&=\frac{\xi}{2},\\
\label{endd}\left(\frac{e^{-\delta_2^\prime}}{(1-\delta_2^\prime)^{1-\delta_2^\prime}}\right)^{\mathcal{Y}}&=\frac{\xi}{2}.
\end{align}

\section{The improved version of McDiarmid inequality}\label{method2}
In this part, we show how to apply the improved version of McDiarmid inequality~\cite{chau2019application} to the numerator of Eq.~\eqref{e1}.

We can rewrite the formula of $\mean{T_{X1}}$ as $\mean{T_{X1}}=\sum_{j=1}^{N_{X1}}\widetilde{W}_j/N_{X1}$ where the value of $\widetilde{W}_j$ is $1$ ($0$) if the $j$th pulse of source $xx$ causes (dose not cause) a wrong effective event. And similarly we can rewrite the formula of $\mean{S_{oo^\prime}}$ as $\mean{S_{oo^\prime}}=\sum_{j=1}^{N_{oo^\prime}}\widetilde{W}_j^\prime/N_{oo^\prime}$ where the value of $\widetilde{W}_j^\prime$ is $1$ ($0$) if the $j$th pulse of source $oo^\prime$ causes (dose not cause) an effective event. Besides, we denote $n_T=m_{X1}+n_{oo^\prime}$ and $S_T=n_T/(N_{X1}+N_{oo^\prime})$, then we have
\begin{equation}
\begin{split}
&\mean{T_{X1}}-e^{-\mu_{1}-\mu_{1}^\prime}\mean{S_{oo^\prime}}/2\\
=&\sum_{j=1}^{N_{oo^\prime}}\frac{\widetilde{W}_j^\prime}{N_{oo^\prime}}+\frac{e^{-\mu_{1}-\mu_{1}^\prime}}{2}\sum_{j=1}^{N_{oo^\prime}}\frac{\widetilde{W}_j^\prime}{N_{oo^\prime}}\\
=&\frac{S_T}{n_T}\sum_{j=1}^{n_T}W_j,
\end{split}
\end{equation}
where the value of $W_j$ is randomly $A_1$ or $A_2$, and $A_1=\frac{N_{X1}+N_{oo^\prime}}{N_{X1}}$, $A_2=-\frac{N_{X1}+N_{oo^\prime}}{2N_{oo^\prime}}e^{-\mu_{1}-\mu_{1}^\prime}$.

According to the Corollary 1 in Ref.~\cite{chau2019application}, the true value of $\sum_{j=1}^{n_T}W_j$ is larger than its observed value by $[n_T\ln{(1/\xi)}/2]^{\frac{1}{2}}(A_1-A_2)$ with probability at most $\xi$. Finally, we have
\begin{equation}
\mean{\overline{e_1^{ph}}}=\frac{T_{X1}-e^{-\mu_{1}-\mu_{1}^\prime}S_{oo^\prime}/2+\Delta}{e^{-\mu_{1}-\mu_{1}^\prime}(\mu_{1}+\mu_{1}^\prime)\mean{\underline{s_1}}},
\end{equation}
where
\begin{equation}
\Delta=\frac{S_T}{n_T}[\frac{n_T\ln{(1/\xi)}}{2}]^{\frac{1}{2}}(A_1-A_2).
\end{equation}

\section{The simulation method of $n_g,n_t^\prime,n_{ood}$ and $E^\prime$}\label{simulationmethod}
Let $n_{c0}$ be the number of effective events in the $Z$ window that  Alice decides not sending and Bob decides sending, $n_{c1}$ be the number of effective events in the $Z$ window that  Alice decides sending and Bob decides not sending, $n_v$ be the number of effective events in the $Z$ window that both Alice and Bob decides not sending, $n_d$ be the number of effective events in the $Z$ window that both Alice and Bob decides sending. We have
\begin{align*}
&n_g=\min(\frac{n_{c0}}{2}+\frac{n_{d}}{2},\frac{n_{c1}}{2}+\frac{n_{v}}{2}),\\
&n_t^\prime=\frac{n_{c0}}{n_{c0}+n_d}\frac{n_{c1}}{n_{c1}+n_v}n_g+\frac{n_{d}}{n_{c0}+n_d}\frac{n_{v}}{n_{c1}+n_v}n_g,\\
&n_{odd}=\frac{(n_{c0}+n_d)(n_{c1}+n_v)}{n_{c0}+n_{c1}+n_v+n_d},\\
&E^{\prime}=\frac{n_vn_d}{n_{c0}n_{c1}+n_vn_d}.
\end{align*}

\bibliography{refs}

\begin{thebibliography}{90}
\expandafter\ifx\csname natexlab\endcsname\relax\def\natexlab#1{#1}\fi
\expandafter\ifx\csname bibnamefont\endcsname\relax
  \def\bibnamefont#1{#1}\fi
\expandafter\ifx\csname bibfnamefont\endcsname\relax
  \def\bibfnamefont#1{#1}\fi
\expandafter\ifx\csname citenamefont\endcsname\relax
  \def\citenamefont#1{#1}\fi
\expandafter\ifx\csname url\endcsname\relax
  \def\url#1{\texttt{#1}}\fi
\expandafter\ifx\csname urlprefix\endcsname\relax\def\urlprefix{URL }\fi
\providecommand{\bibinfo}[2]{#2}
\providecommand{\eprint}[2][]{\url{#2}}

\bibitem[{\citenamefont{Bennett and Brassard}(1984)}]{bennett1984quantum}
\bibinfo{author}{\bibfnamefont{C.~H.} \bibnamefont{Bennett}} \bibnamefont{and}
  \bibinfo{author}{\bibfnamefont{G.}~\bibnamefont{Brassard}}, in
  \emph{\bibinfo{booktitle}{Proceedings of the IEEE International\ Conference
  on Computers, Systems, and Signal Processing}} (\bibinfo{year}{1984}), pp.
  \bibinfo{pages}{175--179}.

\bibitem[{\citenamefont{Gisin et~al.}(2002)\citenamefont{Gisin, Ribordy,
  Tittel, and Zbinden}}]{gisin2002quantum}
\bibinfo{author}{\bibfnamefont{N.}~\bibnamefont{Gisin}},
  \bibinfo{author}{\bibfnamefont{G.}~\bibnamefont{Ribordy}},
  \bibinfo{author}{\bibfnamefont{W.}~\bibnamefont{Tittel}}, \bibnamefont{and}
  \bibinfo{author}{\bibfnamefont{H.}~\bibnamefont{Zbinden}},
  \bibinfo{journal}{Reviews of modern physics} \textbf{\bibinfo{volume}{74}},
  \bibinfo{pages}{145} (\bibinfo{year}{2002}).

\bibitem[{\citenamefont{Gisin and Thew}(2007)}]{gisin2007quantum}
\bibinfo{author}{\bibfnamefont{N.}~\bibnamefont{Gisin}} \bibnamefont{and}
  \bibinfo{author}{\bibfnamefont{R.}~\bibnamefont{Thew}},
  \bibinfo{journal}{Nature photonics} \textbf{\bibinfo{volume}{1}},
  \bibinfo{pages}{165} (\bibinfo{year}{2007}).

\bibitem[{\citenamefont{Scarani et~al.}(2009)\citenamefont{Scarani,
  Bechmann-Pasquinucci, Cerf, Du{\v{s}}ek, L{\"u}tkenhaus, and
  Peev}}]{scarani2009security}
\bibinfo{author}{\bibfnamefont{V.}~\bibnamefont{Scarani}},
  \bibinfo{author}{\bibfnamefont{H.}~\bibnamefont{Bechmann-Pasquinucci}},
  \bibinfo{author}{\bibfnamefont{N.~J.} \bibnamefont{Cerf}},
  \bibinfo{author}{\bibfnamefont{M.}~\bibnamefont{Du{\v{s}}ek}},
  \bibinfo{author}{\bibfnamefont{N.}~\bibnamefont{L{\"u}tkenhaus}},
  \bibnamefont{and} \bibinfo{author}{\bibfnamefont{M.}~\bibnamefont{Peev}},
  \bibinfo{journal}{Reviews of modern physics} \textbf{\bibinfo{volume}{81}},
  \bibinfo{pages}{1301} (\bibinfo{year}{2009}).

\bibitem[{\citenamefont{Shor and Preskill}(2000)}]{shor2000simple}
\bibinfo{author}{\bibfnamefont{P.~W.} \bibnamefont{Shor}} \bibnamefont{and}
  \bibinfo{author}{\bibfnamefont{J.}~\bibnamefont{Preskill}},
  \bibinfo{journal}{Physical review letters} \textbf{\bibinfo{volume}{85}},
  \bibinfo{pages}{441} (\bibinfo{year}{2000}).

\bibitem[{\citenamefont{Koashi}(2009)}]{koashi2009simple}
\bibinfo{author}{\bibfnamefont{M.}~\bibnamefont{Koashi}}, \bibinfo{journal}{New
  Journal of Physics} \textbf{\bibinfo{volume}{11}}, \bibinfo{pages}{045018}
  (\bibinfo{year}{2009}).

\bibitem[{\citenamefont{Tamaki et~al.}(2003)\citenamefont{Tamaki, Koashi, and
  Imoto}}]{tamaki2003unconditionally}
\bibinfo{author}{\bibfnamefont{K.}~\bibnamefont{Tamaki}},
  \bibinfo{author}{\bibfnamefont{M.}~\bibnamefont{Koashi}}, \bibnamefont{and}
  \bibinfo{author}{\bibfnamefont{N.}~\bibnamefont{Imoto}},
  \bibinfo{journal}{Physical review letters} \textbf{\bibinfo{volume}{90}},
  \bibinfo{pages}{167904} (\bibinfo{year}{2003}).

\bibitem[{\citenamefont{Kraus et~al.}(2005)\citenamefont{Kraus, Gisin, and
  Renner}}]{kraus2005lower}
\bibinfo{author}{\bibfnamefont{B.}~\bibnamefont{Kraus}},
  \bibinfo{author}{\bibfnamefont{N.}~\bibnamefont{Gisin}}, \bibnamefont{and}
  \bibinfo{author}{\bibfnamefont{R.}~\bibnamefont{Renner}},
  \bibinfo{journal}{Physical review letters} \textbf{\bibinfo{volume}{95}},
  \bibinfo{pages}{080501} (\bibinfo{year}{2005}).

\bibitem[{\citenamefont{Lucamarini et~al.}(2018)\citenamefont{Lucamarini, Yuan,
  Dynes, and Shields}}]{lu2018overcoming}
\bibinfo{author}{\bibfnamefont{M.}~\bibnamefont{Lucamarini}},
  \bibinfo{author}{\bibfnamefont{Z.~L.} \bibnamefont{Yuan}},
  \bibinfo{author}{\bibfnamefont{J.~F.} \bibnamefont{Dynes}}, \bibnamefont{and}
  \bibinfo{author}{\bibfnamefont{A.~J.} \bibnamefont{Shields}},
  \bibinfo{journal}{Nature} \textbf{\bibinfo{volume}{557}},
  \bibinfo{pages}{400} (\bibinfo{year}{2018}).

\bibitem[{\citenamefont{Wang et~al.}(2018)\citenamefont{Wang, Yu, and
  Hu}}]{wang2018twin}
\bibinfo{author}{\bibfnamefont{X.-B.} \bibnamefont{Wang}},
  \bibinfo{author}{\bibfnamefont{Z.-W.} \bibnamefont{Yu}}, \bibnamefont{and}
  \bibinfo{author}{\bibfnamefont{X.-L.} \bibnamefont{Hu}},
  \bibinfo{journal}{Physical Review A} \textbf{\bibinfo{volume}{98}},
  \bibinfo{pages}{062323} (\bibinfo{year}{2018}).

\bibitem[{\citenamefont{Tamaki et~al.}(2018)\citenamefont{Tamaki, Lo, Wang, and
  Lucamarini}}]{tamaki2018information}
\bibinfo{author}{\bibfnamefont{K.}~\bibnamefont{Tamaki}},
  \bibinfo{author}{\bibfnamefont{H.-K.} \bibnamefont{Lo}},
  \bibinfo{author}{\bibfnamefont{W.}~\bibnamefont{Wang}}, \bibnamefont{and}
  \bibinfo{author}{\bibfnamefont{M.}~\bibnamefont{Lucamarini}},
  \bibinfo{journal}{arXiv preprint arXiv:1805.05511}  (\bibinfo{year}{2018}).

\bibitem[{\citenamefont{Ma et~al.}(2018)\citenamefont{Ma, Zeng, and
  Zhou}}]{ma2018phase}
\bibinfo{author}{\bibfnamefont{X.}~\bibnamefont{Ma}},
  \bibinfo{author}{\bibfnamefont{P.}~\bibnamefont{Zeng}}, \bibnamefont{and}
  \bibinfo{author}{\bibfnamefont{H.}~\bibnamefont{Zhou}},
  \bibinfo{journal}{Physical Review X} \textbf{\bibinfo{volume}{8}},
  \bibinfo{pages}{031043} (\bibinfo{year}{2018}).

\bibitem[{\citenamefont{Lin and L{\"u}tkenhaus}(2018)}]{lin2018simple}
\bibinfo{author}{\bibfnamefont{J.}~\bibnamefont{Lin}} \bibnamefont{and}
  \bibinfo{author}{\bibfnamefont{N.}~\bibnamefont{L{\"u}tkenhaus}},
  \bibinfo{journal}{Physical Review A} \textbf{\bibinfo{volume}{98}},
  \bibinfo{pages}{042332} (\bibinfo{year}{2018}).

\bibitem[{\citenamefont{Cui et~al.}(2019)\citenamefont{Cui, Yin, Wang, Chen,
  Wang, Guo, and Han}}]{cui2019twin}
\bibinfo{author}{\bibfnamefont{C.}~\bibnamefont{Cui}},
  \bibinfo{author}{\bibfnamefont{Z.-Q.} \bibnamefont{Yin}},
  \bibinfo{author}{\bibfnamefont{R.}~\bibnamefont{Wang}},
  \bibinfo{author}{\bibfnamefont{W.}~\bibnamefont{Chen}},
  \bibinfo{author}{\bibfnamefont{S.}~\bibnamefont{Wang}},
  \bibinfo{author}{\bibfnamefont{G.-C.} \bibnamefont{Guo}}, \bibnamefont{and}
  \bibinfo{author}{\bibfnamefont{Z.-F.} \bibnamefont{Han}},
  \bibinfo{journal}{Physical Review Applied} \textbf{\bibinfo{volume}{11}},
  \bibinfo{pages}{034053} (\bibinfo{year}{2019}).

\bibitem[{\citenamefont{Curty et~al.}(2019)\citenamefont{Curty, Azuma, and
  Lo}}]{curty2018simple}
\bibinfo{author}{\bibfnamefont{M.}~\bibnamefont{Curty}},
  \bibinfo{author}{\bibfnamefont{K.}~\bibnamefont{Azuma}}, \bibnamefont{and}
  \bibinfo{author}{\bibfnamefont{H.-K.} \bibnamefont{Lo}},
  \bibinfo{journal}{npj Quantum Information} \textbf{\bibinfo{volume}{5}},
  \bibinfo{pages}{64} (\bibinfo{year}{2019}).

\bibitem[{\citenamefont{Yu et~al.}(2019)\citenamefont{Yu, Hu, Jiang, Xu, and
  Wang}}]{yu2019sending}
\bibinfo{author}{\bibfnamefont{Z.-W.} \bibnamefont{Yu}},
  \bibinfo{author}{\bibfnamefont{X.-L.} \bibnamefont{Hu}},
  \bibinfo{author}{\bibfnamefont{C.}~\bibnamefont{Jiang}},
  \bibinfo{author}{\bibfnamefont{H.}~\bibnamefont{Xu}}, \bibnamefont{and}
  \bibinfo{author}{\bibfnamefont{X.-B.} \bibnamefont{Wang}},
  \bibinfo{journal}{Scientific Reports} \textbf{\bibinfo{volume}{9}},
  \bibinfo{pages}{3080} (\bibinfo{year}{2019}).

\bibitem[{\citenamefont{Maeda et~al.}(2019)\citenamefont{Maeda, Sasaki, and
  Koashi}}]{maeda2019repeaterless}
\bibinfo{author}{\bibfnamefont{K.}~\bibnamefont{Maeda}},
  \bibinfo{author}{\bibfnamefont{T.}~\bibnamefont{Sasaki}}, \bibnamefont{and}
  \bibinfo{author}{\bibfnamefont{M.}~\bibnamefont{Koashi}},
  \bibinfo{journal}{Nature communications} \textbf{\bibinfo{volume}{10}},
  \bibinfo{pages}{3140} (\bibinfo{year}{2019}).

\bibitem[{\citenamefont{Lu et~al.}(2019)\citenamefont{Lu, Yin, Wang, Fan-Yuan,
  Wang, He, Chen, Huang, Xu, Guo et~al.}}]{lu2019twin}
\bibinfo{author}{\bibfnamefont{F.-Y.} \bibnamefont{Lu}},
  \bibinfo{author}{\bibfnamefont{Z.-Q.} \bibnamefont{Yin}},
  \bibinfo{author}{\bibfnamefont{R.}~\bibnamefont{Wang}},
  \bibinfo{author}{\bibfnamefont{G.-J.} \bibnamefont{Fan-Yuan}},
  \bibinfo{author}{\bibfnamefont{S.}~\bibnamefont{Wang}},
  \bibinfo{author}{\bibfnamefont{D.-Y.} \bibnamefont{He}},
  \bibinfo{author}{\bibfnamefont{W.}~\bibnamefont{Chen}},
  \bibinfo{author}{\bibfnamefont{W.}~\bibnamefont{Huang}},
  \bibinfo{author}{\bibfnamefont{B.-J.} \bibnamefont{Xu}},
  \bibinfo{author}{\bibfnamefont{G.-C.} \bibnamefont{Guo}},
  \bibnamefont{et~al.}, \bibinfo{journal}{New Journal of Physics}
  \textbf{\bibinfo{volume}{21}}, \bibinfo{pages}{123030}
  (\bibinfo{year}{2019}).

\bibitem[{\citenamefont{Jiang et~al.}(2019)\citenamefont{Jiang, Yu, Hu, and
  Wang}}]{jiang2019unconditional}
\bibinfo{author}{\bibfnamefont{C.}~\bibnamefont{Jiang}},
  \bibinfo{author}{\bibfnamefont{Z.-W.} \bibnamefont{Yu}},
  \bibinfo{author}{\bibfnamefont{X.-L.} \bibnamefont{Hu}}, \bibnamefont{and}
  \bibinfo{author}{\bibfnamefont{X.-B.} \bibnamefont{Wang}},
  \bibinfo{journal}{Physical Review Applied} \textbf{\bibinfo{volume}{12}},
  \bibinfo{pages}{024061} (\bibinfo{year}{2019}).

\bibitem[{\citenamefont{Xu et~al.}(2019)\citenamefont{Xu, Yu, Jiang, Hu, and
  Wang}}]{xu2019general}
\bibinfo{author}{\bibfnamefont{H.}~\bibnamefont{Xu}},
  \bibinfo{author}{\bibfnamefont{Z.-W.} \bibnamefont{Yu}},
  \bibinfo{author}{\bibfnamefont{C.}~\bibnamefont{Jiang}},
  \bibinfo{author}{\bibfnamefont{X.-L.} \bibnamefont{Hu}}, \bibnamefont{and}
  \bibinfo{author}{\bibfnamefont{X.-B.} \bibnamefont{Wang}},
  \bibinfo{journal}{arXiv preprint arXiv:1904.06331}  (\bibinfo{year}{2019}).

\bibitem[{\citenamefont{Hu et~al.}(2019)\citenamefont{Hu, Jiang, Yu, and
  Wang}}]{hu2019general}
\bibinfo{author}{\bibfnamefont{X.-L.} \bibnamefont{Hu}},
  \bibinfo{author}{\bibfnamefont{C.}~\bibnamefont{Jiang}},
  \bibinfo{author}{\bibfnamefont{Z.-W.} \bibnamefont{Yu}}, \bibnamefont{and}
  \bibinfo{author}{\bibfnamefont{X.-B.} \bibnamefont{Wang}},
  \bibinfo{journal}{Phys. Rev. A} \textbf{\bibinfo{volume}{100}},
  \bibinfo{pages}{062337} (\bibinfo{year}{2019}).

\bibitem[{\citenamefont{Zhang et~al.}(2019)\citenamefont{Zhang, Zhang, and
  Wang}}]{zhang2019twin}
\bibinfo{author}{\bibfnamefont{C.-H.} \bibnamefont{Zhang}},
  \bibinfo{author}{\bibfnamefont{C.-M.} \bibnamefont{Zhang}}, \bibnamefont{and}
  \bibinfo{author}{\bibfnamefont{Q.}~\bibnamefont{Wang}},
  \bibinfo{journal}{Optics letters} \textbf{\bibinfo{volume}{44}},
  \bibinfo{pages}{1468} (\bibinfo{year}{2019}).

\bibitem[{\citenamefont{Zhou et~al.}(2019)\citenamefont{Zhou, Zhang, Zhang, and
  Wang}}]{zhou2019asymmetric}
\bibinfo{author}{\bibfnamefont{X.-Y.} \bibnamefont{Zhou}},
  \bibinfo{author}{\bibfnamefont{C.-H.} \bibnamefont{Zhang}},
  \bibinfo{author}{\bibfnamefont{C.-M.} \bibnamefont{Zhang}}, \bibnamefont{and}
  \bibinfo{author}{\bibfnamefont{Q.}~\bibnamefont{Wang}},
  \bibinfo{journal}{Physical Review A} \textbf{\bibinfo{volume}{99}},
  \bibinfo{pages}{062316} (\bibinfo{year}{2019}).

\bibitem[{\citenamefont{Huttner et~al.}(1995)\citenamefont{Huttner, Imoto,
  Gisin, and Mor}}]{huttner1995quantum}
\bibinfo{author}{\bibfnamefont{B.}~\bibnamefont{Huttner}},
  \bibinfo{author}{\bibfnamefont{N.}~\bibnamefont{Imoto}},
  \bibinfo{author}{\bibfnamefont{N.}~\bibnamefont{Gisin}}, \bibnamefont{and}
  \bibinfo{author}{\bibfnamefont{T.}~\bibnamefont{Mor}},
  \bibinfo{journal}{Physical Review A} \textbf{\bibinfo{volume}{51}},
  \bibinfo{pages}{1863} (\bibinfo{year}{1995}).

\bibitem[{\citenamefont{Yuen}(1996)}]{yuen1996quantum}
\bibinfo{author}{\bibfnamefont{H.~P.} \bibnamefont{Yuen}},
  \bibinfo{journal}{Quantum and Semiclassical Optics: Journal of the European
  Optical Society Part B} \textbf{\bibinfo{volume}{8}}, \bibinfo{pages}{939}
  (\bibinfo{year}{1996}).

\bibitem[{\citenamefont{Brassard et~al.}(2000)\citenamefont{Brassard,
  L{\"u}tkenhaus, Mor, and Sanders}}]{brassard2000limitations}
\bibinfo{author}{\bibfnamefont{G.}~\bibnamefont{Brassard}},
  \bibinfo{author}{\bibfnamefont{N.}~\bibnamefont{L{\"u}tkenhaus}},
  \bibinfo{author}{\bibfnamefont{T.}~\bibnamefont{Mor}}, \bibnamefont{and}
  \bibinfo{author}{\bibfnamefont{B.~C.} \bibnamefont{Sanders}},
  \bibinfo{journal}{Physical Review Letters} \textbf{\bibinfo{volume}{85}},
  \bibinfo{pages}{1330} (\bibinfo{year}{2000}).

\bibitem[{\citenamefont{L{\"u}tkenhaus}(2000)}]{lu2000security}
\bibinfo{author}{\bibfnamefont{N.}~\bibnamefont{L{\"u}tkenhaus}},
  \bibinfo{journal}{Physical Review A} \textbf{\bibinfo{volume}{61}},
  \bibinfo{pages}{052304} (\bibinfo{year}{2000}).

\bibitem[{\citenamefont{L{\"u}tkenhaus and Jahma}(2002)}]{lu2002quantum}
\bibinfo{author}{\bibfnamefont{N.}~\bibnamefont{L{\"u}tkenhaus}}
  \bibnamefont{and} \bibinfo{author}{\bibfnamefont{M.}~\bibnamefont{Jahma}},
  \bibinfo{journal}{New Journal of Physics} \textbf{\bibinfo{volume}{4}},
  \bibinfo{pages}{44} (\bibinfo{year}{2002}).

\bibitem[{\citenamefont{Lydersen et~al.}(2010)\citenamefont{Lydersen, Wiechers,
  Wittmann, Elser, Skaar, and Makarov}}]{lydersen2010hacking}
\bibinfo{author}{\bibfnamefont{L.}~\bibnamefont{Lydersen}},
  \bibinfo{author}{\bibfnamefont{C.}~\bibnamefont{Wiechers}},
  \bibinfo{author}{\bibfnamefont{C.}~\bibnamefont{Wittmann}},
  \bibinfo{author}{\bibfnamefont{D.}~\bibnamefont{Elser}},
  \bibinfo{author}{\bibfnamefont{J.}~\bibnamefont{Skaar}}, \bibnamefont{and}
  \bibinfo{author}{\bibfnamefont{V.}~\bibnamefont{Makarov}},
  \bibinfo{journal}{Nature photonics} \textbf{\bibinfo{volume}{4}},
  \bibinfo{pages}{686} (\bibinfo{year}{2010}).

\bibitem[{\citenamefont{Gerhardt et~al.}(2011)\citenamefont{Gerhardt, Liu,
  Lamas-Linares, Skaar, Kurtsiefer, and Makarov}}]{gerhardt2011full}
\bibinfo{author}{\bibfnamefont{I.}~\bibnamefont{Gerhardt}},
  \bibinfo{author}{\bibfnamefont{Q.}~\bibnamefont{Liu}},
  \bibinfo{author}{\bibfnamefont{A.}~\bibnamefont{Lamas-Linares}},
  \bibinfo{author}{\bibfnamefont{J.}~\bibnamefont{Skaar}},
  \bibinfo{author}{\bibfnamefont{C.}~\bibnamefont{Kurtsiefer}},
  \bibnamefont{and} \bibinfo{author}{\bibfnamefont{V.}~\bibnamefont{Makarov}},
  \bibinfo{journal}{Nature communications} \textbf{\bibinfo{volume}{2}},
  \bibinfo{pages}{349} (\bibinfo{year}{2011}).

\bibitem[{\citenamefont{Scarani and Renner}(2008)}]{scarani2008quantum}
\bibinfo{author}{\bibfnamefont{V.}~\bibnamefont{Scarani}} \bibnamefont{and}
  \bibinfo{author}{\bibfnamefont{R.}~\bibnamefont{Renner}},
  \bibinfo{journal}{Physical review letters} \textbf{\bibinfo{volume}{100}},
  \bibinfo{pages}{200501} (\bibinfo{year}{2008}).

\bibitem[{\citenamefont{Hwang}(2003)}]{hwang2003quantum}
\bibinfo{author}{\bibfnamefont{W.-Y.} \bibnamefont{Hwang}},
  \bibinfo{journal}{Physical Review Letters} \textbf{\bibinfo{volume}{91}},
  \bibinfo{pages}{057901} (\bibinfo{year}{2003}).

\bibitem[{\citenamefont{Wang}(2005)}]{wang2005beating}
\bibinfo{author}{\bibfnamefont{X.-B.} \bibnamefont{Wang}},
  \bibinfo{journal}{Physical Review Letters} \textbf{\bibinfo{volume}{94}},
  \bibinfo{pages}{230503} (\bibinfo{year}{2005}).

\bibitem[{\citenamefont{Lo et~al.}(2005)\citenamefont{Lo, Ma, and
  Chen}}]{lo2005decoy}
\bibinfo{author}{\bibfnamefont{H.-K.} \bibnamefont{Lo}},
  \bibinfo{author}{\bibfnamefont{X.}~\bibnamefont{Ma}}, \bibnamefont{and}
  \bibinfo{author}{\bibfnamefont{K.}~\bibnamefont{Chen}},
  \bibinfo{journal}{Physical review letters} \textbf{\bibinfo{volume}{94}},
  \bibinfo{pages}{230504} (\bibinfo{year}{2005}).

\bibitem[{\citenamefont{Wang et~al.}(2007{\natexlab{a}})\citenamefont{Wang,
  Hiroshima, Tomita, and Hayashi}}]{wang2007quantum}
\bibinfo{author}{\bibfnamefont{X.-B.} \bibnamefont{Wang}},
  \bibinfo{author}{\bibfnamefont{T.}~\bibnamefont{Hiroshima}},
  \bibinfo{author}{\bibfnamefont{A.}~\bibnamefont{Tomita}}, \bibnamefont{and}
  \bibinfo{author}{\bibfnamefont{M.}~\bibnamefont{Hayashi}},
  \bibinfo{journal}{Physics reports} \textbf{\bibinfo{volume}{448}},
  \bibinfo{pages}{1} (\bibinfo{year}{2007}{\natexlab{a}}).

\bibitem[{\citenamefont{Adachi et~al.}(2007)\citenamefont{Adachi, Yamamoto,
  Koashi, and Imoto}}]{ad2007simple}
\bibinfo{author}{\bibfnamefont{Y.}~\bibnamefont{Adachi}},
  \bibinfo{author}{\bibfnamefont{T.}~\bibnamefont{Yamamoto}},
  \bibinfo{author}{\bibfnamefont{M.}~\bibnamefont{Koashi}}, \bibnamefont{and}
  \bibinfo{author}{\bibfnamefont{N.}~\bibnamefont{Imoto}},
  \bibinfo{journal}{Physical review letters} \textbf{\bibinfo{volume}{99}},
  \bibinfo{pages}{180503} (\bibinfo{year}{2007}).

\bibitem[{\citenamefont{Wang et~al.}(2007{\natexlab{b}})\citenamefont{Wang,
  Peng, and Pan}}]{wang2007simple}
\bibinfo{author}{\bibfnamefont{X.-B.} \bibnamefont{Wang}},
  \bibinfo{author}{\bibfnamefont{C.-Z.} \bibnamefont{Peng}}, \bibnamefont{and}
  \bibinfo{author}{\bibfnamefont{J.-W.} \bibnamefont{Pan}},
  \bibinfo{journal}{Applied physics letters} \textbf{\bibinfo{volume}{90}},
  \bibinfo{pages}{031110} (\bibinfo{year}{2007}{\natexlab{b}}).

\bibitem[{\citenamefont{Wang et~al.}(2008{\natexlab{a}})\citenamefont{Wang,
  Peng, Zhang, Yang, and Pan}}]{wang2008general}
\bibinfo{author}{\bibfnamefont{X.-B.} \bibnamefont{Wang}},
  \bibinfo{author}{\bibfnamefont{C.-Z.} \bibnamefont{Peng}},
  \bibinfo{author}{\bibfnamefont{J.}~\bibnamefont{Zhang}},
  \bibinfo{author}{\bibfnamefont{L.}~\bibnamefont{Yang}}, \bibnamefont{and}
  \bibinfo{author}{\bibfnamefont{J.-W.} \bibnamefont{Pan}},
  \bibinfo{journal}{Physical Review A} \textbf{\bibinfo{volume}{77}},
  \bibinfo{pages}{042311} (\bibinfo{year}{2008}{\natexlab{a}}).

\bibitem[{\citenamefont{Wang et~al.}(2009)\citenamefont{Wang, Yang, Peng, and
  Pan}}]{wang2009decoy}
\bibinfo{author}{\bibfnamefont{X.-B.} \bibnamefont{Wang}},
  \bibinfo{author}{\bibfnamefont{L.}~\bibnamefont{Yang}},
  \bibinfo{author}{\bibfnamefont{C.-Z.} \bibnamefont{Peng}}, \bibnamefont{and}
  \bibinfo{author}{\bibfnamefont{J.-W.} \bibnamefont{Pan}},
  \bibinfo{journal}{New Journal of Physics} \textbf{\bibinfo{volume}{11}},
  \bibinfo{pages}{075006} (\bibinfo{year}{2009}).

\bibitem[{\citenamefont{Yu et~al.}(2016)\citenamefont{Yu, Zhou, and
  Wang}}]{yu2016reexamination}
\bibinfo{author}{\bibfnamefont{Z.-W.} \bibnamefont{Yu}},
  \bibinfo{author}{\bibfnamefont{Y.-H.} \bibnamefont{Zhou}}, \bibnamefont{and}
  \bibinfo{author}{\bibfnamefont{X.-B.} \bibnamefont{Wang}},
  \bibinfo{journal}{Physical Review A} \textbf{\bibinfo{volume}{93}},
  \bibinfo{pages}{032307} (\bibinfo{year}{2016}).

\bibitem[{\citenamefont{Chau}(2018)}]{chau2018decoy}
\bibinfo{author}{\bibfnamefont{H.~F.} \bibnamefont{Chau}},
  \bibinfo{journal}{Phys. Rev. A} \textbf{\bibinfo{volume}{97}},
  \bibinfo{pages}{040301} (\bibinfo{year}{2018}).

\bibitem[{\citenamefont{Rosenberg et~al.}(2007)\citenamefont{Rosenberg,
  Harrington, Rice, Hiskett, Peterson, Hughes, Lita, Nam, and
  Nordholt}}]{rosenberg2007long}
\bibinfo{author}{\bibfnamefont{D.}~\bibnamefont{Rosenberg}},
  \bibinfo{author}{\bibfnamefont{J.~W.} \bibnamefont{Harrington}},
  \bibinfo{author}{\bibfnamefont{P.~R.} \bibnamefont{Rice}},
  \bibinfo{author}{\bibfnamefont{P.~A.} \bibnamefont{Hiskett}},
  \bibinfo{author}{\bibfnamefont{C.~G.} \bibnamefont{Peterson}},
  \bibinfo{author}{\bibfnamefont{R.~J.} \bibnamefont{Hughes}},
  \bibinfo{author}{\bibfnamefont{A.~E.} \bibnamefont{Lita}},
  \bibinfo{author}{\bibfnamefont{S.~W.} \bibnamefont{Nam}}, \bibnamefont{and}
  \bibinfo{author}{\bibfnamefont{J.~E.} \bibnamefont{Nordholt}},
  \bibinfo{journal}{Physical review letters} \textbf{\bibinfo{volume}{98}},
  \bibinfo{pages}{010503} (\bibinfo{year}{2007}).

\bibitem[{\citenamefont{Schmitt-Manderbach
  et~al.}(2007)\citenamefont{Schmitt-Manderbach, Weier, F{\"u}rst, Ursin,
  Tiefenbacher, Scheidl, Perdigues, Sodnik, Kurtsiefer, Rarity
  et~al.}}]{schmitt2007experimental}
\bibinfo{author}{\bibfnamefont{T.}~\bibnamefont{Schmitt-Manderbach}},
  \bibinfo{author}{\bibfnamefont{H.}~\bibnamefont{Weier}},
  \bibinfo{author}{\bibfnamefont{M.}~\bibnamefont{F{\"u}rst}},
  \bibinfo{author}{\bibfnamefont{R.}~\bibnamefont{Ursin}},
  \bibinfo{author}{\bibfnamefont{F.}~\bibnamefont{Tiefenbacher}},
  \bibinfo{author}{\bibfnamefont{T.}~\bibnamefont{Scheidl}},
  \bibinfo{author}{\bibfnamefont{J.}~\bibnamefont{Perdigues}},
  \bibinfo{author}{\bibfnamefont{Z.}~\bibnamefont{Sodnik}},
  \bibinfo{author}{\bibfnamefont{C.}~\bibnamefont{Kurtsiefer}},
  \bibinfo{author}{\bibfnamefont{J.~G.} \bibnamefont{Rarity}},
  \bibnamefont{et~al.}, \bibinfo{journal}{Physical Review Letters}
  \textbf{\bibinfo{volume}{98}}, \bibinfo{pages}{010504}
  (\bibinfo{year}{2007}).

\bibitem[{\citenamefont{Peng et~al.}(2007)\citenamefont{Peng, Zhang, Yang, Gao,
  Ma, Yin, Zeng, Yang, Wang, and Pan}}]{peng2007experimental}
\bibinfo{author}{\bibfnamefont{C.-Z.} \bibnamefont{Peng}},
  \bibinfo{author}{\bibfnamefont{J.}~\bibnamefont{Zhang}},
  \bibinfo{author}{\bibfnamefont{D.}~\bibnamefont{Yang}},
  \bibinfo{author}{\bibfnamefont{W.-B.} \bibnamefont{Gao}},
  \bibinfo{author}{\bibfnamefont{H.-X.} \bibnamefont{Ma}},
  \bibinfo{author}{\bibfnamefont{H.}~\bibnamefont{Yin}},
  \bibinfo{author}{\bibfnamefont{H.-P.} \bibnamefont{Zeng}},
  \bibinfo{author}{\bibfnamefont{T.}~\bibnamefont{Yang}},
  \bibinfo{author}{\bibfnamefont{X.-B.} \bibnamefont{Wang}}, \bibnamefont{and}
  \bibinfo{author}{\bibfnamefont{J.-W.} \bibnamefont{Pan}},
  \bibinfo{journal}{Physical review letters} \textbf{\bibinfo{volume}{98}},
  \bibinfo{pages}{010505} (\bibinfo{year}{2007}).

\bibitem[{\citenamefont{Liao et~al.}(2017)\citenamefont{Liao, Cai, Liu, Zhang,
  Li, Ren, Yin, Shen, Cao, Li et~al.}}]{liao2017satellite}
\bibinfo{author}{\bibfnamefont{S.-K.} \bibnamefont{Liao}},
  \bibinfo{author}{\bibfnamefont{W.-Q.} \bibnamefont{Cai}},
  \bibinfo{author}{\bibfnamefont{W.-Y.} \bibnamefont{Liu}},
  \bibinfo{author}{\bibfnamefont{L.}~\bibnamefont{Zhang}},
  \bibinfo{author}{\bibfnamefont{Y.}~\bibnamefont{Li}},
  \bibinfo{author}{\bibfnamefont{J.-G.} \bibnamefont{Ren}},
  \bibinfo{author}{\bibfnamefont{J.}~\bibnamefont{Yin}},
  \bibinfo{author}{\bibfnamefont{Q.}~\bibnamefont{Shen}},
  \bibinfo{author}{\bibfnamefont{Y.}~\bibnamefont{Cao}},
  \bibinfo{author}{\bibfnamefont{Z.-P.} \bibnamefont{Li}},
  \bibnamefont{et~al.}, \bibinfo{journal}{Nature}
  \textbf{\bibinfo{volume}{549}}, \bibinfo{pages}{43} (\bibinfo{year}{2017}).

\bibitem[{\citenamefont{Peev et~al.}(2009)\citenamefont{Peev, Pacher,
  All{\'e}aume, Barreiro, Bouda, Boxleitner, Debuisschert, Diamanti, Dianati,
  Dynes et~al.}}]{peev2009secoqc}
\bibinfo{author}{\bibfnamefont{M.}~\bibnamefont{Peev}},
  \bibinfo{author}{\bibfnamefont{C.}~\bibnamefont{Pacher}},
  \bibinfo{author}{\bibfnamefont{R.}~\bibnamefont{All{\'e}aume}},
  \bibinfo{author}{\bibfnamefont{C.}~\bibnamefont{Barreiro}},
  \bibinfo{author}{\bibfnamefont{J.}~\bibnamefont{Bouda}},
  \bibinfo{author}{\bibfnamefont{W.}~\bibnamefont{Boxleitner}},
  \bibinfo{author}{\bibfnamefont{T.}~\bibnamefont{Debuisschert}},
  \bibinfo{author}{\bibfnamefont{E.}~\bibnamefont{Diamanti}},
  \bibinfo{author}{\bibfnamefont{M.}~\bibnamefont{Dianati}},
  \bibinfo{author}{\bibfnamefont{J.}~\bibnamefont{Dynes}},
  \bibnamefont{et~al.}, \bibinfo{journal}{New Journal of Physics}
  \textbf{\bibinfo{volume}{11}}, \bibinfo{pages}{075001}
  (\bibinfo{year}{2009}).

\bibitem[{\citenamefont{Chen et~al.}(2010)\citenamefont{Chen, Wang, Liang, Liu,
  Liu, Jiang, Wang, Wan, Cai, Ju et~al.}}]{chen2010metropolitan}
\bibinfo{author}{\bibfnamefont{T.-Y.} \bibnamefont{Chen}},
  \bibinfo{author}{\bibfnamefont{J.}~\bibnamefont{Wang}},
  \bibinfo{author}{\bibfnamefont{H.}~\bibnamefont{Liang}},
  \bibinfo{author}{\bibfnamefont{W.-Y.} \bibnamefont{Liu}},
  \bibinfo{author}{\bibfnamefont{Y.}~\bibnamefont{Liu}},
  \bibinfo{author}{\bibfnamefont{X.}~\bibnamefont{Jiang}},
  \bibinfo{author}{\bibfnamefont{Y.}~\bibnamefont{Wang}},
  \bibinfo{author}{\bibfnamefont{X.}~\bibnamefont{Wan}},
  \bibinfo{author}{\bibfnamefont{W.-Q.} \bibnamefont{Cai}},
  \bibinfo{author}{\bibfnamefont{L.}~\bibnamefont{Ju}}, \bibnamefont{et~al.},
  \bibinfo{journal}{Optics express} \textbf{\bibinfo{volume}{18}},
  \bibinfo{pages}{27217} (\bibinfo{year}{2010}).

\bibitem[{\citenamefont{Sasaki et~al.}(2011)\citenamefont{Sasaki, Fujiwara,
  Ishizuka, Klaus, Wakui, Takeoka, Miki, Yamashita, Wang, Tanaka
  et~al.}}]{sasaki2011field}
\bibinfo{author}{\bibfnamefont{M.}~\bibnamefont{Sasaki}},
  \bibinfo{author}{\bibfnamefont{M.}~\bibnamefont{Fujiwara}},
  \bibinfo{author}{\bibfnamefont{H.}~\bibnamefont{Ishizuka}},
  \bibinfo{author}{\bibfnamefont{W.}~\bibnamefont{Klaus}},
  \bibinfo{author}{\bibfnamefont{K.}~\bibnamefont{Wakui}},
  \bibinfo{author}{\bibfnamefont{M.}~\bibnamefont{Takeoka}},
  \bibinfo{author}{\bibfnamefont{S.}~\bibnamefont{Miki}},
  \bibinfo{author}{\bibfnamefont{T.}~\bibnamefont{Yamashita}},
  \bibinfo{author}{\bibfnamefont{Z.}~\bibnamefont{Wang}},
  \bibinfo{author}{\bibfnamefont{A.}~\bibnamefont{Tanaka}},
  \bibnamefont{et~al.}, \bibinfo{journal}{Optics express}
  \textbf{\bibinfo{volume}{19}}, \bibinfo{pages}{10387} (\bibinfo{year}{2011}).

\bibitem[{\citenamefont{Fr{\"o}hlich et~al.}(2013)\citenamefont{Fr{\"o}hlich,
  Dynes, Lucamarini, Sharpe, Yuan, and Shields}}]{frohlich2013quantum}
\bibinfo{author}{\bibfnamefont{B.}~\bibnamefont{Fr{\"o}hlich}},
  \bibinfo{author}{\bibfnamefont{J.~F.} \bibnamefont{Dynes}},
  \bibinfo{author}{\bibfnamefont{M.}~\bibnamefont{Lucamarini}},
  \bibinfo{author}{\bibfnamefont{A.~W.} \bibnamefont{Sharpe}},
  \bibinfo{author}{\bibfnamefont{Z.}~\bibnamefont{Yuan}}, \bibnamefont{and}
  \bibinfo{author}{\bibfnamefont{A.~J.} \bibnamefont{Shields}},
  \bibinfo{journal}{Nature} \textbf{\bibinfo{volume}{501}}, \bibinfo{pages}{69}
  (\bibinfo{year}{2013}).

\bibitem[{\citenamefont{Boaron et~al.}(2018)\citenamefont{Boaron, Boso, Rusca,
  Vulliez, Autebert, Caloz, Perrenoud, Gras, Bussi{\`e}res, Li
  et~al.}}]{boaron2018secure}
\bibinfo{author}{\bibfnamefont{A.}~\bibnamefont{Boaron}},
  \bibinfo{author}{\bibfnamefont{G.}~\bibnamefont{Boso}},
  \bibinfo{author}{\bibfnamefont{D.}~\bibnamefont{Rusca}},
  \bibinfo{author}{\bibfnamefont{C.}~\bibnamefont{Vulliez}},
  \bibinfo{author}{\bibfnamefont{C.}~\bibnamefont{Autebert}},
  \bibinfo{author}{\bibfnamefont{M.}~\bibnamefont{Caloz}},
  \bibinfo{author}{\bibfnamefont{M.}~\bibnamefont{Perrenoud}},
  \bibinfo{author}{\bibfnamefont{G.}~\bibnamefont{Gras}},
  \bibinfo{author}{\bibfnamefont{F.}~\bibnamefont{Bussi{\`e}res}},
  \bibinfo{author}{\bibfnamefont{M.-J.} \bibnamefont{Li}},
  \bibnamefont{et~al.}, \bibinfo{journal}{Physical review letters}
  \textbf{\bibinfo{volume}{121}}, \bibinfo{pages}{190502}
  (\bibinfo{year}{2018}).

\bibitem[{\citenamefont{Wang et~al.}(2008{\natexlab{b}})\citenamefont{Wang,
  Chen, Xavier, Swillo, Zhang, Sauge, Tengner, Han, Guo, and
  Karlsson}}]{wang2008experimental}
\bibinfo{author}{\bibfnamefont{Q.}~\bibnamefont{Wang}},
  \bibinfo{author}{\bibfnamefont{W.}~\bibnamefont{Chen}},
  \bibinfo{author}{\bibfnamefont{G.}~\bibnamefont{Xavier}},
  \bibinfo{author}{\bibfnamefont{M.}~\bibnamefont{Swillo}},
  \bibinfo{author}{\bibfnamefont{T.}~\bibnamefont{Zhang}},
  \bibinfo{author}{\bibfnamefont{S.}~\bibnamefont{Sauge}},
  \bibinfo{author}{\bibfnamefont{M.}~\bibnamefont{Tengner}},
  \bibinfo{author}{\bibfnamefont{Z.-F.} \bibnamefont{Han}},
  \bibinfo{author}{\bibfnamefont{G.-C.} \bibnamefont{Guo}}, \bibnamefont{and}
  \bibinfo{author}{\bibfnamefont{A.}~\bibnamefont{Karlsson}},
  \bibinfo{journal}{Physical Review Letters} \textbf{\bibinfo{volume}{100}},
  \bibinfo{pages}{090501} (\bibinfo{year}{2008}{\natexlab{b}}).

\bibitem[{\citenamefont{Xu et~al.}(2009)\citenamefont{Xu, Zhang, Zhou, Chen,
  Han, and Guo}}]{xu2009experimental}
\bibinfo{author}{\bibfnamefont{F.}~\bibnamefont{Xu}},
  \bibinfo{author}{\bibfnamefont{Y.}~\bibnamefont{Zhang}},
  \bibinfo{author}{\bibfnamefont{Z.}~\bibnamefont{Zhou}},
  \bibinfo{author}{\bibfnamefont{W.}~\bibnamefont{Chen}},
  \bibinfo{author}{\bibfnamefont{Z.}~\bibnamefont{Han}}, \bibnamefont{and}
  \bibinfo{author}{\bibfnamefont{G.}~\bibnamefont{Guo}},
  \bibinfo{journal}{Physical Review A} \textbf{\bibinfo{volume}{80}},
  \bibinfo{pages}{062309} (\bibinfo{year}{2009}).

\bibitem[{\citenamefont{Sasaki et~al.}(2014)\citenamefont{Sasaki, Yamamoto, and
  Koashi}}]{sasaki2014practical}
\bibinfo{author}{\bibfnamefont{T.}~\bibnamefont{Sasaki}},
  \bibinfo{author}{\bibfnamefont{Y.}~\bibnamefont{Yamamoto}}, \bibnamefont{and}
  \bibinfo{author}{\bibfnamefont{M.}~\bibnamefont{Koashi}},
  \bibinfo{journal}{Nature} \textbf{\bibinfo{volume}{509}},
  \bibinfo{pages}{475} (\bibinfo{year}{2014}).

\bibitem[{\citenamefont{Takesue et~al.}(2015)\citenamefont{Takesue, Sasaki,
  Tamaki, and Koashi}}]{takesue2015experimental}
\bibinfo{author}{\bibfnamefont{H.}~\bibnamefont{Takesue}},
  \bibinfo{author}{\bibfnamefont{T.}~\bibnamefont{Sasaki}},
  \bibinfo{author}{\bibfnamefont{K.}~\bibnamefont{Tamaki}}, \bibnamefont{and}
  \bibinfo{author}{\bibfnamefont{M.}~\bibnamefont{Koashi}},
  \bibinfo{journal}{Nature Photonics} \textbf{\bibinfo{volume}{9}},
  \bibinfo{pages}{827} (\bibinfo{year}{2015}).

\bibitem[{\citenamefont{Braunstein and Pirandola}(2012)}]{braunstein2012side}
\bibinfo{author}{\bibfnamefont{S.~L.} \bibnamefont{Braunstein}}
  \bibnamefont{and}
  \bibinfo{author}{\bibfnamefont{S.}~\bibnamefont{Pirandola}},
  \bibinfo{journal}{Physical Review Letters} \textbf{\bibinfo{volume}{108}},
  \bibinfo{pages}{130502} (\bibinfo{year}{2012}).

\bibitem[{\citenamefont{Lo et~al.}(2012)\citenamefont{Lo, Curty, and
  Qi}}]{lo2012measurement}
\bibinfo{author}{\bibfnamefont{H.-K.} \bibnamefont{Lo}},
  \bibinfo{author}{\bibfnamefont{M.}~\bibnamefont{Curty}}, \bibnamefont{and}
  \bibinfo{author}{\bibfnamefont{B.}~\bibnamefont{Qi}},
  \bibinfo{journal}{Physical Review Letters} \textbf{\bibinfo{volume}{108}},
  \bibinfo{pages}{130503} (\bibinfo{year}{2012}).

\bibitem[{\citenamefont{Wang}(2013)}]{wang2013three}
\bibinfo{author}{\bibfnamefont{X.-B.} \bibnamefont{Wang}},
  \bibinfo{journal}{Physical Review A} \textbf{\bibinfo{volume}{87}},
  \bibinfo{pages}{012320} (\bibinfo{year}{2013}).

\bibitem[{\citenamefont{Rubenok et~al.}(2013)\citenamefont{Rubenok, Slater,
  Chan, Lucio-Martinez, and Tittel}}]{rubenok2013real}
\bibinfo{author}{\bibfnamefont{A.}~\bibnamefont{Rubenok}},
  \bibinfo{author}{\bibfnamefont{J.~A.} \bibnamefont{Slater}},
  \bibinfo{author}{\bibfnamefont{P.}~\bibnamefont{Chan}},
  \bibinfo{author}{\bibfnamefont{I.}~\bibnamefont{Lucio-Martinez}},
  \bibnamefont{and} \bibinfo{author}{\bibfnamefont{W.}~\bibnamefont{Tittel}},
  \bibinfo{journal}{Physical Review Letters} \textbf{\bibinfo{volume}{111}},
  \bibinfo{pages}{130501} (\bibinfo{year}{2013}).

\bibitem[{\citenamefont{Liu et~al.}(2013)\citenamefont{Liu, Chen, Wang, Liang,
  Shentu, Wang, Cui, Yin, Liu, Li et~al.}}]{liu2013experimental}
\bibinfo{author}{\bibfnamefont{Y.}~\bibnamefont{Liu}},
  \bibinfo{author}{\bibfnamefont{T.-Y.} \bibnamefont{Chen}},
  \bibinfo{author}{\bibfnamefont{L.-J.} \bibnamefont{Wang}},
  \bibinfo{author}{\bibfnamefont{H.}~\bibnamefont{Liang}},
  \bibinfo{author}{\bibfnamefont{G.-L.} \bibnamefont{Shentu}},
  \bibinfo{author}{\bibfnamefont{J.}~\bibnamefont{Wang}},
  \bibinfo{author}{\bibfnamefont{K.}~\bibnamefont{Cui}},
  \bibinfo{author}{\bibfnamefont{H.-L.} \bibnamefont{Yin}},
  \bibinfo{author}{\bibfnamefont{N.-L.} \bibnamefont{Liu}},
  \bibinfo{author}{\bibfnamefont{L.}~\bibnamefont{Li}}, \bibnamefont{et~al.},
  \bibinfo{journal}{Physical Review Letters} \textbf{\bibinfo{volume}{111}},
  \bibinfo{pages}{130502} (\bibinfo{year}{2013}).

\bibitem[{\citenamefont{Tang et~al.}(2014{\natexlab{a}})\citenamefont{Tang,
  Liao, Xu, Qi, Qian, and Lo}}]{tang2014experimental}
\bibinfo{author}{\bibfnamefont{Z.}~\bibnamefont{Tang}},
  \bibinfo{author}{\bibfnamefont{Z.}~\bibnamefont{Liao}},
  \bibinfo{author}{\bibfnamefont{F.}~\bibnamefont{Xu}},
  \bibinfo{author}{\bibfnamefont{B.}~\bibnamefont{Qi}},
  \bibinfo{author}{\bibfnamefont{L.}~\bibnamefont{Qian}}, \bibnamefont{and}
  \bibinfo{author}{\bibfnamefont{H.-K.} \bibnamefont{Lo}},
  \bibinfo{journal}{Physical Review Letters} \textbf{\bibinfo{volume}{112}},
  \bibinfo{pages}{190503} (\bibinfo{year}{2014}{\natexlab{a}}).

\bibitem[{\citenamefont{Tang et~al.}(2014{\natexlab{b}})\citenamefont{Tang,
  Yin, Chen, Liu, Zhang, Jiang, Zhang, Wang, You, Guan
  et~al.}}]{tang2014measurement}
\bibinfo{author}{\bibfnamefont{Y.-L.} \bibnamefont{Tang}},
  \bibinfo{author}{\bibfnamefont{H.-L.} \bibnamefont{Yin}},
  \bibinfo{author}{\bibfnamefont{S.-J.} \bibnamefont{Chen}},
  \bibinfo{author}{\bibfnamefont{Y.}~\bibnamefont{Liu}},
  \bibinfo{author}{\bibfnamefont{W.-J.} \bibnamefont{Zhang}},
  \bibinfo{author}{\bibfnamefont{X.}~\bibnamefont{Jiang}},
  \bibinfo{author}{\bibfnamefont{L.}~\bibnamefont{Zhang}},
  \bibinfo{author}{\bibfnamefont{J.}~\bibnamefont{Wang}},
  \bibinfo{author}{\bibfnamefont{L.-X.} \bibnamefont{You}},
  \bibinfo{author}{\bibfnamefont{J.-Y.} \bibnamefont{Guan}},
  \bibnamefont{et~al.}, \bibinfo{journal}{Physical Review Letters}
  \textbf{\bibinfo{volume}{113}}, \bibinfo{pages}{190501}
  (\bibinfo{year}{2014}{\natexlab{b}}).

\bibitem[{\citenamefont{Wang et~al.}(2015)\citenamefont{Wang, Song, Yin, Wang,
  Chen, Zhang, Guo, and Han}}]{wang2015phase}
\bibinfo{author}{\bibfnamefont{C.}~\bibnamefont{Wang}},
  \bibinfo{author}{\bibfnamefont{X.-T.} \bibnamefont{Song}},
  \bibinfo{author}{\bibfnamefont{Z.-Q.} \bibnamefont{Yin}},
  \bibinfo{author}{\bibfnamefont{S.}~\bibnamefont{Wang}},
  \bibinfo{author}{\bibfnamefont{W.}~\bibnamefont{Chen}},
  \bibinfo{author}{\bibfnamefont{C.-M.} \bibnamefont{Zhang}},
  \bibinfo{author}{\bibfnamefont{G.-C.} \bibnamefont{Guo}}, \bibnamefont{and}
  \bibinfo{author}{\bibfnamefont{Z.-F.} \bibnamefont{Han}},
  \bibinfo{journal}{Physical Review Letters} \textbf{\bibinfo{volume}{115}},
  \bibinfo{pages}{160502} (\bibinfo{year}{2015}).

\bibitem[{\citenamefont{Comandar et~al.}(2016)\citenamefont{Comandar,
  Lucamarini, Fr{\"o}hlich, Dynes, Sharpe, Tam, Yuan, Penty, and
  Shields}}]{comandar2016quantum}
\bibinfo{author}{\bibfnamefont{L.}~\bibnamefont{Comandar}},
  \bibinfo{author}{\bibfnamefont{M.}~\bibnamefont{Lucamarini}},
  \bibinfo{author}{\bibfnamefont{B.}~\bibnamefont{Fr{\"o}hlich}},
  \bibinfo{author}{\bibfnamefont{J.}~\bibnamefont{Dynes}},
  \bibinfo{author}{\bibfnamefont{A.}~\bibnamefont{Sharpe}},
  \bibinfo{author}{\bibfnamefont{S.-B.} \bibnamefont{Tam}},
  \bibinfo{author}{\bibfnamefont{Z.}~\bibnamefont{Yuan}},
  \bibinfo{author}{\bibfnamefont{R.}~\bibnamefont{Penty}}, \bibnamefont{and}
  \bibinfo{author}{\bibfnamefont{A.}~\bibnamefont{Shields}},
  \bibinfo{journal}{Nature Photonics} \textbf{\bibinfo{volume}{10}},
  \bibinfo{pages}{312} (\bibinfo{year}{2016}).

\bibitem[{\citenamefont{Yin et~al.}(2016)\citenamefont{Yin, Chen, Yu, Liu, You,
  Zhou, Chen, Mao, Huang, Zhang et~al.}}]{yin2016measurement}
\bibinfo{author}{\bibfnamefont{H.-L.} \bibnamefont{Yin}},
  \bibinfo{author}{\bibfnamefont{T.-Y.} \bibnamefont{Chen}},
  \bibinfo{author}{\bibfnamefont{Z.-W.} \bibnamefont{Yu}},
  \bibinfo{author}{\bibfnamefont{H.}~\bibnamefont{Liu}},
  \bibinfo{author}{\bibfnamefont{L.-X.} \bibnamefont{You}},
  \bibinfo{author}{\bibfnamefont{Y.-H.} \bibnamefont{Zhou}},
  \bibinfo{author}{\bibfnamefont{S.-J.} \bibnamefont{Chen}},
  \bibinfo{author}{\bibfnamefont{Y.}~\bibnamefont{Mao}},
  \bibinfo{author}{\bibfnamefont{M.-Q.} \bibnamefont{Huang}},
  \bibinfo{author}{\bibfnamefont{W.-J.} \bibnamefont{Zhang}},
  \bibnamefont{et~al.}, \bibinfo{journal}{Physical Review Letters}
  \textbf{\bibinfo{volume}{117}}, \bibinfo{pages}{190501}
  (\bibinfo{year}{2016}).

\bibitem[{\citenamefont{Wang et~al.}(2017)\citenamefont{Wang, Yin, Wang, Chen,
  Guo, and Han}}]{wang2017measurement}
\bibinfo{author}{\bibfnamefont{C.}~\bibnamefont{Wang}},
  \bibinfo{author}{\bibfnamefont{Z.-Q.} \bibnamefont{Yin}},
  \bibinfo{author}{\bibfnamefont{S.}~\bibnamefont{Wang}},
  \bibinfo{author}{\bibfnamefont{W.}~\bibnamefont{Chen}},
  \bibinfo{author}{\bibfnamefont{G.-C.} \bibnamefont{Guo}}, \bibnamefont{and}
  \bibinfo{author}{\bibfnamefont{Z.-F.} \bibnamefont{Han}},
  \bibinfo{journal}{Optica} \textbf{\bibinfo{volume}{4}}, \bibinfo{pages}{1016}
  (\bibinfo{year}{2017}).

\bibitem[{\citenamefont{Xu et~al.}(2013)\citenamefont{Xu, Curty, Qi, and
  Lo}}]{xu2013practical}
\bibinfo{author}{\bibfnamefont{F.}~\bibnamefont{Xu}},
  \bibinfo{author}{\bibfnamefont{M.}~\bibnamefont{Curty}},
  \bibinfo{author}{\bibfnamefont{B.}~\bibnamefont{Qi}}, \bibnamefont{and}
  \bibinfo{author}{\bibfnamefont{H.-K.} \bibnamefont{Lo}},
  \bibinfo{journal}{New Journal of Physics} \textbf{\bibinfo{volume}{15}},
  \bibinfo{pages}{113007} (\bibinfo{year}{2013}).

\bibitem[{\citenamefont{Curty et~al.}(2014)\citenamefont{Curty, Xu, Cui, Lim,
  Tamaki, and Lo}}]{curty2014finite}
\bibinfo{author}{\bibfnamefont{M.}~\bibnamefont{Curty}},
  \bibinfo{author}{\bibfnamefont{F.}~\bibnamefont{Xu}},
  \bibinfo{author}{\bibfnamefont{W.}~\bibnamefont{Cui}},
  \bibinfo{author}{\bibfnamefont{C.~C.~W.} \bibnamefont{Lim}},
  \bibinfo{author}{\bibfnamefont{K.}~\bibnamefont{Tamaki}}, \bibnamefont{and}
  \bibinfo{author}{\bibfnamefont{H.-K.} \bibnamefont{Lo}},
  \bibinfo{journal}{Nature communications} \textbf{\bibinfo{volume}{5}},
  \bibinfo{pages}{3732} (\bibinfo{year}{2014}).

\bibitem[{\citenamefont{Xu et~al.}(2014)\citenamefont{Xu, Xu, and
  Lo}}]{xu2014protocol}
\bibinfo{author}{\bibfnamefont{F.}~\bibnamefont{Xu}},
  \bibinfo{author}{\bibfnamefont{H.}~\bibnamefont{Xu}}, \bibnamefont{and}
  \bibinfo{author}{\bibfnamefont{H.-K.} \bibnamefont{Lo}},
  \bibinfo{journal}{Physical Review A} \textbf{\bibinfo{volume}{89}},
  \bibinfo{pages}{052333} (\bibinfo{year}{2014}).

\bibitem[{\citenamefont{Yu et~al.}(2015)\citenamefont{Yu, Zhou, and
  Wang}}]{yu2015statistical}
\bibinfo{author}{\bibfnamefont{Z.-W.} \bibnamefont{Yu}},
  \bibinfo{author}{\bibfnamefont{Y.-H.} \bibnamefont{Zhou}}, \bibnamefont{and}
  \bibinfo{author}{\bibfnamefont{X.-B.} \bibnamefont{Wang}},
  \bibinfo{journal}{Physical Review A} \textbf{\bibinfo{volume}{91}},
  \bibinfo{pages}{032318} (\bibinfo{year}{2015}).

\bibitem[{\citenamefont{Zhou et~al.}(2016)\citenamefont{Zhou, Yu, and
  Wang}}]{zhou2016making}
\bibinfo{author}{\bibfnamefont{Y.-H.} \bibnamefont{Zhou}},
  \bibinfo{author}{\bibfnamefont{Z.-W.} \bibnamefont{Yu}}, \bibnamefont{and}
  \bibinfo{author}{\bibfnamefont{X.-B.} \bibnamefont{Wang}},
  \bibinfo{journal}{Physical Review A} \textbf{\bibinfo{volume}{93}},
  \bibinfo{pages}{042324} (\bibinfo{year}{2016}).

\bibitem[{\citenamefont{Jiang et~al.}(2017)\citenamefont{Jiang, Yu, and
  Wang}}]{jiang2017measurement}
\bibinfo{author}{\bibfnamefont{C.}~\bibnamefont{Jiang}},
  \bibinfo{author}{\bibfnamefont{Z.-W.} \bibnamefont{Yu}}, \bibnamefont{and}
  \bibinfo{author}{\bibfnamefont{X.-B.} \bibnamefont{Wang}},
  \bibinfo{journal}{Physical Review A} \textbf{\bibinfo{volume}{95}},
  \bibinfo{pages}{032325} (\bibinfo{year}{2017}).

\bibitem[{\citenamefont{Takeoka et~al.}(2014)\citenamefont{Takeoka, Guha, and
  Wilde}}]{takeoka2014fundamental}
\bibinfo{author}{\bibfnamefont{M.}~\bibnamefont{Takeoka}},
  \bibinfo{author}{\bibfnamefont{S.}~\bibnamefont{Guha}}, \bibnamefont{and}
  \bibinfo{author}{\bibfnamefont{M.~M.} \bibnamefont{Wilde}},
  \bibinfo{journal}{Nature communications} \textbf{\bibinfo{volume}{5}},
  \bibinfo{pages}{5235} (\bibinfo{year}{2014}).

\bibitem[{\citenamefont{Pirandola et~al.}(2009)\citenamefont{Pirandola,
  Garc{\'\i}a-Patr{\'o}n, Braunstein, and Lloyd}}]{pirandola2009direct}
\bibinfo{author}{\bibfnamefont{S.}~\bibnamefont{Pirandola}},
  \bibinfo{author}{\bibfnamefont{R.}~\bibnamefont{Garc{\'\i}a-Patr{\'o}n}},
  \bibinfo{author}{\bibfnamefont{S.~L.} \bibnamefont{Braunstein}},
  \bibnamefont{and} \bibinfo{author}{\bibfnamefont{S.}~\bibnamefont{Lloyd}},
  \bibinfo{journal}{Physical review letters} \textbf{\bibinfo{volume}{102}},
  \bibinfo{pages}{050503} (\bibinfo{year}{2009}).

\bibitem[{\citenamefont{Minder et~al.}(2019)\citenamefont{Minder, Pittaluga,
  Roberts, Lucamarini, Dynes, Yuan, and Shields}}]{minder2019experimental}
\bibinfo{author}{\bibfnamefont{M.}~\bibnamefont{Minder}},
  \bibinfo{author}{\bibfnamefont{M.}~\bibnamefont{Pittaluga}},
  \bibinfo{author}{\bibfnamefont{G.}~\bibnamefont{Roberts}},
  \bibinfo{author}{\bibfnamefont{M.}~\bibnamefont{Lucamarini}},
  \bibinfo{author}{\bibfnamefont{J.}~\bibnamefont{Dynes}},
  \bibinfo{author}{\bibfnamefont{Z.}~\bibnamefont{Yuan}}, \bibnamefont{and}
  \bibinfo{author}{\bibfnamefont{A.}~\bibnamefont{Shields}},
  \bibinfo{journal}{Nature Photonics} \textbf{\bibinfo{volume}{13}},
  \bibinfo{pages}{334} (\bibinfo{year}{2019}).

\bibitem[{\citenamefont{Liu et~al.}(2019)\citenamefont{Liu, Yu, Zhang, Guan,
  Chen, Zhang, Hu, Li, Jiang, Lin et~al.}}]{liu2019experimental}
\bibinfo{author}{\bibfnamefont{Y.}~\bibnamefont{Liu}},
  \bibinfo{author}{\bibfnamefont{Z.-W.} \bibnamefont{Yu}},
  \bibinfo{author}{\bibfnamefont{W.}~\bibnamefont{Zhang}},
  \bibinfo{author}{\bibfnamefont{J.-Y.} \bibnamefont{Guan}},
  \bibinfo{author}{\bibfnamefont{J.-P.} \bibnamefont{Chen}},
  \bibinfo{author}{\bibfnamefont{C.}~\bibnamefont{Zhang}},
  \bibinfo{author}{\bibfnamefont{X.-L.} \bibnamefont{Hu}},
  \bibinfo{author}{\bibfnamefont{H.}~\bibnamefont{Li}},
  \bibinfo{author}{\bibfnamefont{C.}~\bibnamefont{Jiang}},
  \bibinfo{author}{\bibfnamefont{J.}~\bibnamefont{Lin}}, \bibnamefont{et~al.},
  \bibinfo{journal}{Physical Review Letters} \textbf{\bibinfo{volume}{123}},
  \bibinfo{pages}{100505} (\bibinfo{year}{2019}).

\bibitem[{\citenamefont{Wang et~al.}(2019)\citenamefont{Wang, He, Yin, Lu, Cui,
  Chen, Zhou, Guo, and Han}}]{wang2019beating}
\bibinfo{author}{\bibfnamefont{S.}~\bibnamefont{Wang}},
  \bibinfo{author}{\bibfnamefont{D.-Y.} \bibnamefont{He}},
  \bibinfo{author}{\bibfnamefont{Z.-Q.} \bibnamefont{Yin}},
  \bibinfo{author}{\bibfnamefont{F.-Y.} \bibnamefont{Lu}},
  \bibinfo{author}{\bibfnamefont{C.-H.} \bibnamefont{Cui}},
  \bibinfo{author}{\bibfnamefont{W.}~\bibnamefont{Chen}},
  \bibinfo{author}{\bibfnamefont{Z.}~\bibnamefont{Zhou}},
  \bibinfo{author}{\bibfnamefont{G.-C.} \bibnamefont{Guo}}, \bibnamefont{and}
  \bibinfo{author}{\bibfnamefont{Z.-F.} \bibnamefont{Han}},
  \bibinfo{journal}{Physical Review X} \textbf{\bibinfo{volume}{9}},
  \bibinfo{pages}{021046} (\bibinfo{year}{2019}).

\bibitem[{\citenamefont{Zhong et~al.}(2019)\citenamefont{Zhong, Hu, Curty,
  Qian, and Lo}}]{zhong2019proof}
\bibinfo{author}{\bibfnamefont{X.}~\bibnamefont{Zhong}},
  \bibinfo{author}{\bibfnamefont{J.}~\bibnamefont{Hu}},
  \bibinfo{author}{\bibfnamefont{M.}~\bibnamefont{Curty}},
  \bibinfo{author}{\bibfnamefont{L.}~\bibnamefont{Qian}}, \bibnamefont{and}
  \bibinfo{author}{\bibfnamefont{H.-K.} \bibnamefont{Lo}},
  \bibinfo{journal}{Physical Review Letter} \textbf{\bibinfo{volume}{123}},
  \bibinfo{pages}{100506} (\bibinfo{year}{2019}).

\bibitem[{\citenamefont{Fang et~al.}(2019)\citenamefont{Fang, Zeng, Liu, Zou,
  Wu, Tang, Sheng, Xiang, Zhang, Li et~al.}}]{fang2019surpassing}
\bibinfo{author}{\bibfnamefont{X.-T.} \bibnamefont{Fang}},
  \bibinfo{author}{\bibfnamefont{P.}~\bibnamefont{Zeng}},
  \bibinfo{author}{\bibfnamefont{H.}~\bibnamefont{Liu}},
  \bibinfo{author}{\bibfnamefont{M.}~\bibnamefont{Zou}},
  \bibinfo{author}{\bibfnamefont{W.}~\bibnamefont{Wu}},
  \bibinfo{author}{\bibfnamefont{Y.-L.} \bibnamefont{Tang}},
  \bibinfo{author}{\bibfnamefont{Y.-J.} \bibnamefont{Sheng}},
  \bibinfo{author}{\bibfnamefont{Y.}~\bibnamefont{Xiang}},
  \bibinfo{author}{\bibfnamefont{W.}~\bibnamefont{Zhang}},
  \bibinfo{author}{\bibfnamefont{H.}~\bibnamefont{Li}}, \bibnamefont{et~al.},
  \bibinfo{journal}{arXiv preprint arXiv:1908.01271}  (\bibinfo{year}{2019}).

\bibitem[{\citenamefont{Chau and Ng}(2019)}]{chau2019application}
\bibinfo{author}{\bibfnamefont{H.}~\bibnamefont{Chau}} \bibnamefont{and}
  \bibinfo{author}{\bibfnamefont{K.}~\bibnamefont{Ng}}, \bibinfo{journal}{arXiv
  preprint arXiv:1906.12115 (accepted by New J. Phys.)}
  (\bibinfo{year}{2019}).

\bibitem[{\citenamefont{Pirandola et~al.}(2017)\citenamefont{Pirandola,
  Laurenza, Ottaviani, and Banchi}}]{pirandola2017fundamental}
\bibinfo{author}{\bibfnamefont{S.}~\bibnamefont{Pirandola}},
  \bibinfo{author}{\bibfnamefont{R.}~\bibnamefont{Laurenza}},
  \bibinfo{author}{\bibfnamefont{C.}~\bibnamefont{Ottaviani}},
  \bibnamefont{and} \bibinfo{author}{\bibfnamefont{L.}~\bibnamefont{Banchi}},
  \bibinfo{journal}{Nature communications} \textbf{\bibinfo{volume}{8}},
  \bibinfo{pages}{15043} (\bibinfo{year}{2017}).

\bibitem[{\citenamefont{Chau}(2002)}]{chau2002practical}
\bibinfo{author}{\bibfnamefont{H.~F.} \bibnamefont{Chau}},
  \bibinfo{journal}{Physical Review A} \textbf{\bibinfo{volume}{66}},
  \bibinfo{pages}{060302} (\bibinfo{year}{2002}).

\bibitem[{\citenamefont{Gottesman and Lo}(2003)}]{gottesman2003proof}
\bibinfo{author}{\bibfnamefont{D.}~\bibnamefont{Gottesman}} \bibnamefont{and}
  \bibinfo{author}{\bibfnamefont{H.-K.} \bibnamefont{Lo}},
  \bibinfo{journal}{IEEE Transactions on Information Theory}
  \textbf{\bibinfo{volume}{49}}, \bibinfo{pages}{457} (\bibinfo{year}{2003}).

\bibitem[{\citenamefont{Kraus et~al.}(2007)\citenamefont{Kraus, Branciard, and
  Renner}}]{kraus2007security}
\bibinfo{author}{\bibfnamefont{B.}~\bibnamefont{Kraus}},
  \bibinfo{author}{\bibfnamefont{C.}~\bibnamefont{Branciard}},
  \bibnamefont{and} \bibinfo{author}{\bibfnamefont{R.}~\bibnamefont{Renner}},
  \bibinfo{journal}{Physical Review A} \textbf{\bibinfo{volume}{75}},
  \bibinfo{pages}{012316} (\bibinfo{year}{2007}).

\bibitem[{\citenamefont{Christandl et~al.}(2009)\citenamefont{Christandl,
  K{\"o}nig, and Renner}}]{christandl2009postselection}
\bibinfo{author}{\bibfnamefont{M.}~\bibnamefont{Christandl}},
  \bibinfo{author}{\bibfnamefont{R.}~\bibnamefont{K{\"o}nig}},
  \bibnamefont{and} \bibinfo{author}{\bibfnamefont{R.}~\bibnamefont{Renner}},
  \bibinfo{journal}{Physical review letters} \textbf{\bibinfo{volume}{102}},
  \bibinfo{pages}{020504} (\bibinfo{year}{2009}).

\bibitem[{\citenamefont{Renner}(2005)}]{renner2005security}
\bibinfo{author}{\bibfnamefont{R.}~\bibnamefont{Renner}}, Ph.D. thesis,
  \bibinfo{school}{SWISS FEDERAL INSTITUTE OF TECHNOLOGY ZURICH}
  (\bibinfo{year}{2005}).

\bibitem[{\citenamefont{Renner}(2007)}]{renner2007symmetry}
\bibinfo{author}{\bibfnamefont{R.}~\bibnamefont{Renner}},
  \bibinfo{journal}{Nature Physics} \textbf{\bibinfo{volume}{3}},
  \bibinfo{pages}{645} (\bibinfo{year}{2007}).

\bibitem[{\citenamefont{Nielsen and Chuang}(2002)}]{nielsen2002quantum}
\bibinfo{author}{\bibfnamefont{M.~A.} \bibnamefont{Nielsen}} \bibnamefont{and}
  \bibinfo{author}{\bibfnamefont{I.}~\bibnamefont{Chuang}},
  \emph{\bibinfo{title}{Quantum computation and quantum information}}
  (\bibinfo{year}{2002}).

\bibitem[{\citenamefont{Tomamichel et~al.}(2012)\citenamefont{Tomamichel, Lim,
  Gisin, and Renner}}]{tomamichel2012tight}
\bibinfo{author}{\bibfnamefont{M.}~\bibnamefont{Tomamichel}},
  \bibinfo{author}{\bibfnamefont{C.~C.~W.} \bibnamefont{Lim}},
  \bibinfo{author}{\bibfnamefont{N.}~\bibnamefont{Gisin}}, \bibnamefont{and}
  \bibinfo{author}{\bibfnamefont{R.}~\bibnamefont{Renner}},
  \bibinfo{journal}{Nature communications} \textbf{\bibinfo{volume}{3}},
  \bibinfo{pages}{634} (\bibinfo{year}{2012}).

\bibitem[{\citenamefont{Vitanov et~al.}(2013)\citenamefont{Vitanov, Dupuis,
  Tomamichel, and Renner}}]{vitanov2013chain}
\bibinfo{author}{\bibfnamefont{A.}~\bibnamefont{Vitanov}},
  \bibinfo{author}{\bibfnamefont{F.}~\bibnamefont{Dupuis}},
  \bibinfo{author}{\bibfnamefont{M.}~\bibnamefont{Tomamichel}},
  \bibnamefont{and} \bibinfo{author}{\bibfnamefont{R.}~\bibnamefont{Renner}},
  \bibinfo{journal}{IEEE Transactions on Information Theory}
  \textbf{\bibinfo{volume}{59}}, \bibinfo{pages}{2603} (\bibinfo{year}{2013}).

\bibitem[{\citenamefont{Chernoff et~al.}(1952)}]{chernoff1952measure}
\bibinfo{author}{\bibfnamefont{H.}~\bibnamefont{Chernoff}}
  \bibnamefont{et~al.}, \bibinfo{journal}{The Annals of Mathematical
  Statistics} \textbf{\bibinfo{volume}{23}}, \bibinfo{pages}{493}
  (\bibinfo{year}{1952}).

\end{thebibliography}

\end{document}